	\newcounter{theorem_c}
	\numberwithin{theorem_c}{section}
	\theoremstyle{plain}
	\newtheorem{theorem}[theorem_c]{Theorem}
	\newtheorem{lemma}[theorem_c]{Lemma}
	\newtheorem{corollary}[theorem_c]{Corollary}
	\theoremstyle{definition}
	\newtheorem{definition}[theorem_c]{Definition}
	\newtheorem{remark}[theorem_c]{Remark}
	\newcommand\numberthis{\addtocounter{equation}{1}\tag{\theequation}}
	\newcommand{\emptyArg}{\,\underline{\hspace{6px}}\,}
	\newcommand{\inlineQuote}[1]{\textquotedblleft #1\textquotedblright}
	\newcommand{\integers}{\mathbb{Z}}
	\newcommand{\reals}{\mathbb{R}}
	\newcommand{\integersMod}[1]{\mathbb{Z}_{#1}}
	\newcommand{\modclass}[2]{#1 \; (\text{mod } #2)}
	\newcommand{\eqdef}{\stackrel{def}{=}}
		\newcommand{\isom}{\cong}
		\newcommand{\epim}{\twoheadrightarrow}
		\newcommand{\monom}{\rightarrowtail}
		\newcommand{\tensor}{\otimes}
		\newcommand{\tensorUnit}{I}
		\newcommand{\id}[1]{id_{#1}}
		\newcommand{\Automs}[2]{\operatorname{Aut}_{\,#1}\left[#2\right]}
		\newcommand{\Isoms}[3]{\operatorname{Iso}_{\,#1}\left[#2,#3\right]}
		\newcommand{\States}[2]{\operatorname{States}_{#1}[#2]}
			\newcommand{\fdHilbCategory}{\operatorname{fdHilb}}
			\newcommand{\AbCategory}{\operatorname{Ab}}
			\newcommand{\CategoryC}{\mathcal{C}}
		\newcommand{\monad}{T}
		\newcommand{\unit}[1]{\eta_{#1}}
		\newcommand{\mult}[1]{\mu_{#1}}
			\newcommand{\timeobj}{\mathbb{T}}
			\newcommand{\freehistory}[1]{\operatorname{freehist}_{#1}}
			\newcommand{\concretehistory}[2]{\operatorname{hist}^{#1}_{#2}} 
			\newcommand{\ConcreteHistories}[2]{\operatorname{Hists}_{#1}[#2]}
			\newcommand{\timeinversion}[1]{i_{#1}}
			\newcommand{\foliate}[1]{\operatorname{foliate}_{#1}}
			\newcommand{\timeunit}{\unit{}}
			\newcommand{\timemult}{\mult{}}
			\newcommand{\timeinverse}{\timeinversion{}}
	\newcommand{\ket}[1]{\vert #1 \rangle}
	\newcommand{\bra}[1]{\langle #1 \vert}
	\newcommand{\SpaceH}{\mathcal{H}}
	\newcommand{\SpaceG}{\mathcal{G}}
	\newcommand{\UnitaryOps}[1]{\operatorname{U}\left[#1\right]}
	\newcommand{\Dim}[1]{\operatorname{dim}#1}
	\newcommand{\LtwoSym}{\operatorname{L}^2}
	\newcommand{\Ltwo}[1]{\LtwoSym[#1]}
	\newcommand{\Irreps}[1]{\operatorname{Irr}[#1]}
	\newcommand{\Norm}[2]{|| #2 ||_{#1}}
	\newcommand{\Bounded}[1]{\operatorname{B}\left[#1\right]}
		\tikzstyle{trivial}=
		\tikzstyle{filler}=
		\tikzstyle{system}=
		\tikzstyle{morphism}=
		\tikzstyle{spider}=
		\tikzstyle{state}=
		\tikzstyle{stateNSC}=
		\tikzstyle{effect}=
		\tikzstyle{effectNSC}=
		\tikzstyle{identity}=
		\tikzstyle{boundary}=
		\tikzstyle{rectangle}=
		\tikzstyle{timespace}=
		\tikzstyle{freqspace}=
		\tikzstyle{algebra}=
		\newcommand{\SymFontShift}{0pt}
		\newcommand{\groupStructColour}{Red}
		\newcommand{\classicalStructColour}{YellowGreen}
		\newcommand{\internalgroupStructColour}{Purple}
		\newcommand{\internalclassicalStructColour}{Cyan}
		\newcommand{\internaltimestructure}{(\raisebox{\SymFontShift}{\hbox{\input{modules/symbols/internaltimediagSym.tex}}}\!,\raisebox{\SymFontShift}{\hbox{\input{modules/symbols/internaltrivialcharSym.tex}}}\!,\raisebox{\SymFontShift}{\hbox{\input{modules/symbols/internaltimematchSym.tex}}}\!,\raisebox{\SymFontShift}{\hbox{\input{modules/symbols/internaltimematchunitSym.tex}}}\!)}
		\newcommand{\internaltimecomonoid}{\raisebox{\SymFontShift}{\hbox{\input{modules/symbols/internaltimediagSym.tex}}}\!,\raisebox{\SymFontShift}{\hbox{\input{modules/symbols/internaltrivialcharSym.tex}}}\!}
		\newcommand{\internalenergystructure}{(\raisebox{\SymFontShift}{\hbox{\input{modules/symbols/internaltimemultSym.tex}}}\!,\raisebox{\SymFontShift}{\hbox{\input{modules/symbols/internaltimeunitSym.tex}}}\!,\raisebox{\SymFontShift}{\hbox{\input{modules/symbols/internaltimecomultSym.tex}}}\!,\raisebox{\SymFontShift}{\hbox{\input{modules/symbols/internaltimecounitSym.tex}}}\!)}
		\newcommand{\internalenergycomonoid}{(\raisebox{\SymFontShift}{\hbox{\input{modules/symbols/internaltimecomultSym.tex}}}\!,\raisebox{\SymFontShift}{\hbox{\input{modules/symbols/internaltimecounitSym.tex}}}\!)}
		\newcommand{\internalenergymonoid}{\raisebox{\SymFontShift}{\hbox{\input{modules/symbols/internaltimemultSym.tex}}}\!,\raisebox{\SymFontShift}{\hbox{\input{modules/symbols/internaltimeunitSym.tex}}}\!}
		\newcommand{\groupElementSymName}{}
		\newcommand{\repElementSymName}{}
		\newcommand{\tikzEquationsScale}{1.5}
\tikzstyle{env}=[copoint,regular polygon rotate=0,minimum width=0.2cm, fill=black]
\tikzstyle{probs}=[shape=semicircle,fill=white,draw=black,shape border rotate=180,minimum width=1.2cm]
\tikzstyle{every picture}=[baseline=-0.25em,scale=0.5]
\tikzstyle{dotpic}=[] 
\tikzstyle{diredges}=[every to/.style={diredge}]
\tikzstyle{math matrix}=[matrix of math nodes,left delimiter=(,right delimiter=),inner sep=2pt,column sep=1em,row sep=0.5em,nodes={inner sep=0pt},text height=1.5ex, text depth=0.25ex]
\tikzstyle{inline text}=[text height=1.5ex, text depth=0.25ex,yshift=0.5mm]
\tikzstyle{label}=[font=\footnotesize,text height=1.5ex, text depth=0.25ex,yshift=0.5mm]
\tikzstyle{left label}=[label,anchor=east,xshift=1.5mm]
\tikzstyle{right label}=[label,anchor=west,xshift=-1.5mm]
\tikzstyle{braceedge}=[decorate,decoration={brace,amplitude=2mm,raise=-1mm}]
\tikzstyle{small braceedge}=[decorate,decoration={brace,amplitude=1mm,raise=-1mm}]
\tikzstyle{doubled}=[line width=1.6pt] 
\tikzstyle{boldedge}=[doubled,shorten <=-0.17mm,shorten >=-0.17mm]
\tikzstyle{boldedgegray}=[doubled,gray,shorten <=-0.17mm,shorten >=-0.17mm]
\tikzstyle{semidoubled}=[line width=1.4pt] 
\tikzstyle{semiboldedgegray}=[semidoubled,gray,shorten <=-0.17mm,shorten >=-0.17mm]
\tikzstyle{boldedgedashed}=[very thick,dashed,shorten <=-0.17mm,shorten >=-0.17mm]
\tikzstyle{vboldedgedashed}=[doubled,dashed,shorten <=-0.17mm,shorten >=-0.17mm]
\tikzstyle{left hook arrow}=[left hook-latex]
\tikzstyle{right hook arrow}=[right hook-latex]
\tikzstyle{sembracket}=[line width=0.5pt,shorten <=-0.07mm,shorten >=-0.07mm]
\tikzstyle{causal edge}=[->,thick,gray]
\tikzstyle{causal nondir}=[thick,gray]
\tikzstyle{timeline}=[thick,gray, dashed]
\tikzstyle{cedge}=[<->,thick,gray!70!white]
\tikzstyle{empty diagram}=[draw=gray!40!white,dashed,shape=rectangle,minimum width=1cm,minimum height=1cm]
\tikzstyle{empty diagram small}=[draw=gray!50!white,dashed,shape=rectangle,minimum width=0.6cm,minimum height=0.5cm]
\tikzstyle{dot}=[inner sep=0mm,minimum width=3mm,minimum height=3mm,draw,shape=circle,text depth=-0.1mm]
\tikzstyle{ddot}=[inner sep=0mm, doubled, minimum width=3.5mm,minimum height=3.5mm,draw,shape=circle]
\tikzstyle{black dot}=[dot,fill=black]
\tikzstyle{white dot}=[dot,fill=white,,text depth=-0.2mm]
\tikzstyle{green dot}=[white dot] 
\tikzstyle{gray dot}=[dot,fill=gray!40!white,,text depth=-0.2mm]
\tikzstyle{red dot}=[gray dot] 
\tikzstyle{black ddot}=[ddot,fill=black]
\tikzstyle{white ddot}=[ddot,fill=white]
\tikzstyle{gray ddot}=[ddot,fill=gray!40!white]
\tikzstyle{gray edge}=[gray!40!white]
\tikzstyle{small dot}=[inner sep=0.5mm,minimum width=0pt,minimum height=0pt,draw,shape=circle]
\tikzstyle{small black dot}=[small dot,fill=black]
\tikzstyle{small white dot}=[small dot,fill=white]
\tikzstyle{small gray dot}=[small dot,fill=gray!40!white]
\tikzstyle{causal dot}=[inner sep=0.4mm,minimum width=0pt,minimum height=0pt,draw=white,shape=circle,fill=gray!40!white]
\tikzstyle{phase dimensions}=[minimum size=5mm,font=\footnotesize,rectangle,rounded corners=2.5mm,inner sep=0.2mm,outer sep=-2mm,text height=1ex, text depth=0.25ex, yshift=0.5mm]
\tikzstyle{dphase dimensions}=[phase dimensions]
\tikzstyle{phase dot}=[dot,phase dimensions]
\tikzstyle{white phase dot}=[dot,fill=white,phase dimensions]
\tikzstyle{white phase ddot}=[ddot,fill=white,dphase dimensions]
\tikzstyle{white rect ddot}=[draw=black,fill=white,doubled,minimum size=5mm,font=\footnotesize,rectangle,rounded corners=2.5mm,inner sep=0.2mm]
\tikzstyle{gray rect ddot}=[draw=black,fill=gray!40!white,doubled,minimum size=6mm,font=\footnotesize,rectangle,rounded corners=3mm]
\tikzstyle{gray phase dot}=[dot,fill=gray!40!white,phase dimensions]
\tikzstyle{gray phase ddot}=[ddot,fill=gray!40!white,dphase dimensions]
\tikzstyle{grey phase dot}=[gray phase dot]
\tikzstyle{grey phase ddot}=[gray phase ddot]
\tikzstyle{cnot}=[fill=white,shape=circle,inner sep=-1.4pt]
\tikzstyle{hadamard}=[square box,inner sep=0 pt,font=\footnotesize,minimum height=4mm,minimum width=4mm]
\tikzstyle{dhadamard}=[hadamard,doubled]
\tikzstyle{antipode}=[white dot,inner sep=0.3mm,font=\footnotesize]
\tikzstyle{scalar}=[diamond,draw,inner sep=0.5pt,font=\small]
\tikzstyle{dscalar}=[diamond,doubled, draw,inner sep=0.5pt,font=\small]
\tikzstyle{small box}=[rectangle,inline text,fill=white,draw,minimum height=5mm,yshift=-0.5mm,minimum width=5mm,font=\small]
\tikzstyle{small gray box}=[small box,fill=gray!30]
\tikzstyle{medium box}=[rectangle,inline text,fill=white,draw,minimum height=5mm,yshift=-0.5mm,minimum width=10mm,font=\small]
\tikzstyle{square box}=[small box] 
\tikzstyle{medium gray box}=[small box,fill=gray!30]
\tikzstyle{semilarge box}=[rectangle,inline text,fill=white,draw,minimum height=5mm,yshift=-0.5mm,minimum width=12.5mm,font=\small]
\tikzstyle{large box}=[rectangle,inline text,fill=white,draw,minimum height=5mm,yshift=-0.5mm,minimum width=15mm,font=\small]
\tikzstyle{large gray box}=[small box,fill=gray!30]
\tikzstyle{gray square point}=[small box,fill=gray!50]
\tikzstyle{dphase box white}=[dbox]
\tikzstyle{dphase box gray}=[dbox,fill=gray!50!white]
\tikzstyle{point}=[regular polygon,regular polygon sides=3,draw,scale=0.75,inner sep=-0.5pt,minimum width=9mm,fill=white,regular polygon rotate=180]
\tikzstyle{copoint}=[regular polygon,regular polygon sides=3,draw,scale=0.75,inner sep=-0.5pt,minimum width=9mm,fill=white]
\tikzstyle{dpoint}=[point,doubled]
\tikzstyle{dcopoint}=[copoint,doubled]
\tikzstyle{wide copoint}=[fill=white,draw,shape=isosceles triangle,shape border rotate=90,isosceles triangle stretches=true,inner sep=0pt,minimum width=1.5cm,minimum height=6.12mm]
\tikzstyle{wide point}=[fill=white,draw,shape=isosceles triangle,shape border rotate=-90,isosceles triangle stretches=true,inner sep=0pt,minimum width=1.5cm,minimum height=6.12mm,yshift=-0.0mm]
\tikzstyle{wide point plus}=[fill=white,draw,shape=isosceles triangle,shape border rotate=-90,isosceles triangle stretches=true,inner sep=0pt,minimum width=1.74cm,minimum height=7mm,yshift=-0.0mm]
\tikzstyle{wide dpoint}=[fill=white,doubled,draw,shape=isosceles triangle,shape border rotate=-90,isosceles triangle stretches=true,inner sep=0pt,minimum width=1.5cm,minimum height=6.12mm,yshift=-0.0mm]
\tikzstyle{tinypoint}=[regular polygon,regular polygon sides=3,draw,scale=0.55,inner sep=-0.15pt,minimum width=6mm,fill=white,regular polygon rotate=180] 
\tikzstyle{white point}=[point]
\tikzstyle{white dpoint}=[dpoint]
\tikzstyle{green point}=[white point] 
\tikzstyle{white copoint}=[copoint]
\tikzstyle{gray point}=[point,fill=gray!40!white]
\tikzstyle{gray dpoint}=[gray point,doubled]
\tikzstyle{red point}=[gray point] 
\tikzstyle{gray copoint}=[copoint,fill=gray!40!white]
\tikzstyle{gray dcopoint}=[gray copoint,doubled]
\tikzstyle{black point}=[point,fill=black]
\tikzstyle{black copoint}=[copoint,fill=black]
\tikzstyle{tiny gray point}=[tinypoint,fill=gray!40!white]
\tikzstyle{diredge}=[->]
\tikzstyle{rdiredge}=[<-]
\tikzstyle{thickdiredge}=[->, very thick]
\tikzstyle{pointer edge}=[->,very thick,gray]
\tikzstyle{pointer edge part}=[very thick,gray]
\tikzstyle{dashed edge}=[dashed]
\tikzstyle{thick dashed edge}=[very thick,dashed]
\tikzstyle{thick gray dashed edge}=[thick dashed edge,gray!40]
\tikzstyle{thick map edge}=[very thick,|->]
\newcommand{\boxshape}[3]{%
\pgfdeclareshape{#1}{
\inheritsavedanchors[from=rectangle] 
\inheritanchorborder[from=rectangle]
\inheritanchor[from=rectangle]{center}
\inheritanchor[from=rectangle]{north}
\inheritanchor[from=rectangle]{south}
\inheritanchor[from=rectangle]{west}
\inheritanchor[from=rectangle]{east}
\backgroundpath{
\southwest \pgf@xa=\pgf@x \pgf@ya=\pgf@y
\northeast \pgf@xb=\pgf@x \pgf@yb=\pgf@y

\@tempdima=#2
\@tempdimb=#3

\pgfpathmoveto{\pgfpoint{\pgf@xa - 5pt + \@tempdima}{\pgf@ya}}
\pgfpathlineto{\pgfpoint{\pgf@xa - 5pt - \@tempdima}{\pgf@yb}}
\pgfpathlineto{\pgfpoint{\pgf@xb + 5pt + \@tempdimb}{\pgf@yb}}
\pgfpathlineto{\pgfpoint{\pgf@xb + 5pt - \@tempdimb}{\pgf@ya}}
\pgfpathlineto{\pgfpoint{\pgf@xa - 5pt + \@tempdima}{\pgf@ya}}
\pgfpathclose
}
}}
\tikzstyle{cloud}=[shape=cloud,draw,minimum width=1.5cm,minimum height=1.5cm]
\tikzstyle{map}=[draw,shape=NEbox,inner sep=2pt,minimum height=6mm,fill=white]
\tikzstyle{dashedmap}=[draw,dashed,shape=NEbox,inner sep=2pt,minimum height=6mm,fill=white]
\tikzstyle{mapdag}=[draw,shape=SEbox,inner sep=2pt,minimum height=6mm,fill=white]
\tikzstyle{mapadj}=[draw,shape=SEbox,inner sep=2pt,minimum height=6mm,fill=white]
\tikzstyle{maptrans}=[draw,shape=SWbox,inner sep=2pt,minimum height=6mm,fill=white]
\tikzstyle{mapconj}=[draw,shape=NWbox,inner sep=2pt,minimum height=6mm,fill=white]
\tikzstyle{medium map}=[draw,shape=NEbox,inner sep=2pt,minimum height=6mm,fill=white,minimum width=7mm]
\tikzstyle{medium map dag}=[draw,shape=SEbox,inner sep=2pt,minimum height=6mm,fill=white,minimum width=7mm]
\tikzstyle{medium map adj}=[draw,shape=SEbox,inner sep=2pt,minimum height=6mm,fill=white,minimum width=7mm]
\tikzstyle{medium map trans}=[draw,shape=SWbox,inner sep=2pt,minimum height=6mm,fill=white,minimum width=7mm]
\tikzstyle{medium map conj}=[draw,shape=NWbox,inner sep=2pt,minimum height=6mm,fill=white,minimum width=7mm]
\tikzstyle{semilarge map}=[draw,shape=NEbox,inner sep=2pt,minimum height=6mm,fill=white,minimum width=9.5mm]
\tikzstyle{semilarge map trans}=[draw,shape=SWbox,inner sep=2pt,minimum height=6mm,fill=white,minimum width=9.5mm]
\tikzstyle{semilarge map adj}=[draw,shape=SEbox,inner sep=2pt,minimum height=6mm,fill=white,minimum width=9.5mm]
\tikzstyle{semilarge map dag}=[draw,shape=SEbox,inner sep=2pt,minimum height=6mm,fill=white,minimum width=9.5mm]
\tikzstyle{semilarge map conj}=[draw,shape=NWbox,inner sep=2pt,minimum height=6mm,fill=white,minimum width=9.5mm]
\tikzstyle{large map}=[draw,shape=NEbox,inner sep=2pt,minimum height=6mm,fill=white,minimum width=12mm]
\tikzstyle{very large map}=[draw,shape=NEbox,inner sep=2pt,minimum height=6mm,fill=white,minimum width=17mm]
\tikzstyle{medium dmap}=[draw,doubled,shape=NEbox,inner sep=2pt,minimum height=6mm,fill=white,minimum width=7mm]
\tikzstyle{medium dmap dag}=[draw,doubled,shape=SEbox,inner sep=2pt,minimum height=6mm,fill=white,minimum width=7mm]
\tikzstyle{medium dmap adj}=[draw,doubled,shape=SEbox,inner sep=2pt,minimum height=6mm,fill=white,minimum width=7mm]
\tikzstyle{medium dmap trans}=[draw,doubled,shape=SWbox,inner sep=2pt,minimum height=6mm,fill=white,minimum width=7mm]
\tikzstyle{medium dmap conj}=[draw,doubled,shape=NWbox,inner sep=2pt,minimum height=6mm,fill=white,minimum width=7mm]
\tikzstyle{semilarge dmap}=[draw,doubled,shape=NEbox,inner sep=2pt,minimum height=6mm,fill=white,minimum width=9.5mm]
\tikzstyle{semilarge dmap trans}=[draw,doubled,shape=SWbox,inner sep=2pt,minimum height=6mm,fill=white,minimum width=9.5mm]
\tikzstyle{semilarge dmap adj}=[draw,doubled,shape=SEbox,inner sep=2pt,minimum height=6mm,fill=white,minimum width=9.5mm]
\tikzstyle{semilarge dmap dag}=[draw,doubled,shape=SEbox,inner sep=2pt,minimum height=6mm,fill=white,minimum width=9.5mm]
\tikzstyle{semilarge dmap conj}=[draw,doubled,shape=NWbox,inner sep=2pt,minimum height=6mm,fill=white,minimum width=9.5mm]
\tikzstyle{large dmap}=[draw,doubled,shape=NEbox,inner sep=2pt,minimum height=6mm,fill=white,minimum width=12mm]
\tikzstyle{large dmap conj}=[draw,doubled,shape=NWbox,inner sep=2pt,minimum height=6mm,fill=white,minimum width=12mm]
\tikzstyle{large dmap trans}=[draw,doubled,shape=SWbox,inner sep=2pt,minimum height=6mm,fill=white,minimum width=12mm]
\tikzstyle{very large dmap}=[draw,doubled,shape=NEbox,inner sep=2pt,minimum height=6mm,fill=white,minimum width=19.5mm]
\tikzstyle{muxbox}=[draw,shape=rectangle,minimum height=3mm,minimum width=3mm,fill=white]
\tikzstyle{dmuxbox}=[muxbox,doubled]
\tikzstyle{box}=[draw,shape=rectangle,inner sep=2pt,minimum height=6mm,minimum width=6mm,fill=white]
\tikzstyle{dbox}=[draw,doubled,shape=rectangle,inner sep=2pt,minimum height=6mm,minimum width=6mm,fill=white]
\tikzstyle{dmap}=[draw,doubled,shape=NEbox,inner sep=2pt,minimum height=6mm,fill=white]
\tikzstyle{dmapdag}=[draw,doubled,shape=SEbox,inner sep=2pt,minimum height=6mm,fill=white]
\tikzstyle{dmapadj}=[draw,doubled,shape=SEbox,inner sep=2pt,minimum height=6mm,fill=white]
\tikzstyle{dmaptrans}=[draw,doubled,shape=SWbox,inner sep=2pt,minimum height=6mm,fill=white]
\tikzstyle{dmapconj}=[draw,doubled,shape=NWbox,inner sep=2pt,minimum height=6mm,fill=white]
\tikzstyle{ddmap}=[draw,doubled,dashed,shape=NEbox,inner sep=2pt,minimum height=6mm,fill=white]
\tikzstyle{ddmapdag}=[draw,doubled,dashed,shape=SEbox,inner sep=2pt,minimum height=6mm,fill=white]
\tikzstyle{ddmapadj}=[draw,doubled,dashed,shape=SEbox,inner sep=2pt,minimum height=6mm,fill=white]
\tikzstyle{ddmaptrans}=[draw,doubled,dashed,shape=SWbox,inner sep=2pt,minimum height=6mm,fill=white]
\tikzstyle{ddmapconj}=[draw,doubled,dashed,shape=NWbox,inner sep=2pt,minimum height=6mm,fill=white]
\tikzstyle{smap}=[draw,shape=sNEbox,fill=white]
\tikzstyle{smapdag}=[draw,shape=sSEbox,fill=white]
\tikzstyle{smapadj}=[draw,shape=sSEbox,fill=white]
\tikzstyle{smaptrans}=[draw,shape=sSWbox,fill=white]
\tikzstyle{smapconj}=[draw,shape=sNWbox,fill=white]
\tikzstyle{dsmap}=[draw,dashed,shape=sNEbox,fill=white]
\tikzstyle{dsmapdag}=[draw,dashed,shape=sSEbox,fill=white]
\tikzstyle{dsmaptrans}=[draw,dashed,shape=sSWbox,fill=white]
\tikzstyle{dsmapconj}=[draw,dashed,shape=sNWbox,fill=white]
\tikzstyle{mmap}=[draw,shape=mNEbox]
\tikzstyle{mmapdag}=[draw,shape=mSEbox]
\tikzstyle{mmaptrans}=[draw,shape=mSWbox]
\tikzstyle{mmapconj}=[draw,shape=mNWbox]
\tikzstyle{mmapgray}=[draw,fill=gray!40!white,shape=mNEbox]
\tikzstyle{smapgray}=[draw,fill=gray!40!white,shape=sNEbox]
\pgfmathsetmacro{\pgf@shorten@left}{\pgfkeysvalueof{/tikz/shorten left}}
\pgfmathsetmacro{\pgf@shorten@right}{\pgfkeysvalueof{/tikz/shorten right}}
\pgfmathsetmacro{\pgf@shorten@left}{\pgfkeysvalueof{/tikz/shorten left}}
\pgfmathsetmacro{\pgf@shorten@right}{\pgfkeysvalueof{/tikz/shorten right}}
\tikzstyle{kpoint common}=[draw,fill=white,inner sep=1pt,minimum height=3mm]
\tikzstyle{kpoint}=[shape=cornerpoint,shorten left=5pt,kpoint common]
\tikzstyle{kpoint adjoint}=[shape=cornercopoint,shorten left=5pt,kpoint common]
\tikzstyle{kpoint conjugate}=[shape=cornerpoint,shorten right=5pt,kpoint common]
\tikzstyle{kpoint transpose}=[shape=cornercopoint,shorten right=5pt,kpoint common]
\tikzstyle{kpoint symm}=[shape=cornerpoint,shorten left=5pt,shorten right=5pt,kpoint common]
\tikzstyle{black kpoint}=[shape=cornerpoint,shorten left=5pt,kpoint common,fill=black]
\tikzstyle{black kpoint adjoint}=[shape=cornercopoint,shorten left=5pt,kpoint common,fill=black]
\tikzstyle{kpointdag}=[kpoint adjoint]
\tikzstyle{kpointadj}=[kpoint adjoint]
\tikzstyle{kpointconj}=[kpoint conjugate]
\tikzstyle{kpointtrans}=[kpoint transpose]
\tikzstyle{big kpoint}=[kpoint, minimum width=1.2 cm, minimum height=8mm, inner sep=4pt, text depth=3mm]
\tikzstyle{wide kpoint}=[kpoint, minimum width=1 cm, inner sep=2pt, text depth=-0.7 mm]
\tikzstyle{wide kpointdag}=[kpointdag, minimum width=1 cm, inner sep=2pt, text depth=0.7 mm]
\tikzstyle{wide kpointconj}=[kpointconj, minimum width=1 cm, inner sep=2pt, text depth=-0.7 mm]
\tikzstyle{wide kpointtrans}=[kpointtrans, minimum width=1 cm, inner sep=2pt, text depth=0.7 mm]
\tikzstyle{gray kpoint}=[kpoint,fill=gray!50!white]
\tikzstyle{gray kpointdag}=[kpointdag,fill=gray!50!white]
\tikzstyle{gray kpointadj}=[kpointadj,fill=gray!50!white]
\tikzstyle{gray kpointconj}=[kpointconj,fill=gray!50!white]
\tikzstyle{gray kpointtrans}=[kpointtrans,fill=gray!50!white]
\tikzstyle{gray dkpoint}=[kpoint,fill=gray!50!white,doubled]
\tikzstyle{gray dkpointdag}=[kpointdag,fill=gray!50!white,doubled]
\tikzstyle{gray dkpointadj}=[kpointadj,fill=gray!50!white,doubled]
\tikzstyle{gray dkpointconj}=[kpointconj,fill=gray!50!white,doubled]
\tikzstyle{gray dkpointtrans}=[kpointtrans,fill=gray!50!white,doubled]
\tikzstyle{white label}=[draw,fill=white,rectangle,inner sep=0.7 mm]
\tikzstyle{gray label}=[draw,fill=gray!50!white,rectangle,inner sep=0.7 mm]
\tikzstyle{black label}=[draw,fill=black,rectangle,inner sep=0.7 mm]
\tikzstyle{dkpoint}=[kpoint,doubled]
\tikzstyle{wide dkpoint}=[wide kpoint,doubled]
\tikzstyle{dkpointdag}=[kpoint adjoint,doubled]
\tikzstyle{dkcopoint}=[kpoint adjoint,doubled]
\tikzstyle{dkpointadj}=[kpoint adjoint,doubled]
\tikzstyle{dkpointconj}=[kpoint conjugate,doubled]
\tikzstyle{dkpointtrans}=[kpoint transpose,doubled]
\tikzstyle{kscalar}=[kpoint common, shape=EBox, inner xsep=-1pt, inner ysep=3pt,font=\small]
\tikzstyle{kscalarconj}=[kpoint common, shape=WBox, inner xsep=-1pt, inner ysep=3pt,font=\small]
 \tikzstyle{upground}=[circuit ee IEC,thick,ground,rotate=90,scale=2.5]
 \tikzstyle{downground}=[circuit ee IEC,thick,ground,rotate=-90,scale=2.5]
 \tikzstyle{bigground}=[regular polygon,regular polygon sides=3,draw=gray,scale=0.50,inner sep=-0.5pt,minimum width=10mm,fill=gray]
\tikzstyle{arrs}=[-latex,font=\small,auto]
\tikzstyle{arrow plain}=[arrs]
\tikzstyle{arrow dashed}=[dashed,arrs]
\tikzstyle{arrow bold}=[very thick,arrs]
\tikzstyle{arrow hide}=[draw=white!0,-]
\tikzstyle{arrow reverse}=[latex-]
\tikzstyle{cdnode}=[]
\begin{document}

	\title{Categorical Semantics for Schr\"{o}dinger's Equation}
	
	\author{
		\IEEEauthorblockN{Stefano Gogioso}
		\IEEEauthorblockA{
		Quantum Group, Dept of Computer Science\\
		University of Oxford, UK\\
		\href{mailto: stefano.gogioso@cs.ox.ac.uk}{stefano.gogioso@cs.ox.ac.uk}
		}
	}

	\maketitle

	\begin{abstract}
		Applying ideas from monadic dynamics to the well-established framework of categorical quantum mechanics, we provide a novel toolbox for the simulation of finite-dimensional quantum dynamics. We use strongly complementary structures to give a graphical characterisation of quantum clocks, their action on systems and the relevant energy observables, and we proceed to formalise the connection between unitary dynamics and projection-valued spectra. We identify the Weyl canonical commutation relations in the axioms of strong complementarity, and conclude the existence of a dual pair of time/energy observables for finite-dimensional quantum clocks, with the relevant uncertainty principle given by mutual unbias of the corresponding orthonormal bases. We show that Schr\"{o}dinger's equation can be abstractly formulated as characterising the Fourier transforms of certain Eilenberg-Moore morphisms from a quantum clock to a quantum dynamical system, and we use this to obtain a generalised version of the Feynman's clock construction. We tackle the issue of synchronism of clocks and systems, prove conservation of total energy and give conditions for the existence of an internal time observable for a quantum dynamical system. Finally, we identify our treatment as part of a more general theory of simulated symmetries of quantum systems (of which our clock actions are a special case) and their conservation laws (of which energy is a special case).
	\end{abstract}

	\setlength{\parindent}{0pt}
	\numberwithin{equation}{section}

\section{Background}

		The Categorical Quantum Mechanics (CQM) programme \cite{CQM-seminal}\cite{CQM-DeepBeauty}\cite{CQM-NewStructures}\cite{CQM-QCSnotes} is concerned with the understanding, through the language of category theory, of the structural and operational features of quantum theory. The sequential and parallel aspects of quantum processes are captured by the compositional and symmetric monoidal structures of the category $\fdHilbCategory$ of finite-dimensional Hilbert spaces and linear maps. State/operator duality, corresponding to the Shr\"{o}dinger/Heisemberg picture duality in quantum dynamics, finds its categorical formulation in the $\dagger$-compact structure of $\fdHilbCategory$.\\

		The formulation of Categorical Quantum Mechanics in terms of symmetric monoidal category theory makes it possible for it to be faithfully and rigorously translated into a graphical calculus \cite{QFT-BaezRosetta}. Based on a two-dimensional wire-and-box visualisation of processes, with wire composition giving information flow and juxtaposition used for parallelism, this pictorial formulation brings forth the operational aspects of quantum theory\cite{CQM-QuantumPicturalism}, leaving behind the clutter of traditional vector and matrix notation.\\

		The investigation of classical-quantum duality goes through the definition of classical structures \cite{CQM-QuantumClassicalStructuralism}\cite{CQM-QuantumMeasuNoSums}, i.e. special commutative $\dagger$-Frobenius algebras. Shown to correspond to orthonormal bases (which they delete, duplicate and match) in $\fdHilbCategory$, classical structures are key to the operational characterisation of quantum measurements, controlled operations and completely positive maps (environments and mixed state quantum mechanics are further explored in \cite{CQM-EvironmentClassicalChannels}). Interaction between different classical structures can be understood by introducing requirements of coherence, complementarity (a.k.a. Hopf law) and strong complementarity (a.k.a. bialgebra equations): strongly complementary structures play a fundamental role in the formulation of non-locality\cite{CQM-StrongComplementarity}, and are exactly classified by finite abelian groups \cite{CQM-KissingerPhdthesis}.\\

		The categorical approach to theories of physical systems and transformations is perhaps best argued in \cite{QFT-DoeringIsham4} and the collected works \cite{CQM-DeepBeauty}\cite{CQM-NewStructures}, and it provides the starting point for the formulation of the \textit{monadic dynamics} framework \cite{StefanoGogioso-MonadicDynamics}. In the same spirit of physical simulation as \cite{QTC-FeynmannSimulatingPhysicsComputers}\cite{QGR-QuantumGraphenity}, the framework aims at a categorical formulation of the operational aspects of dynamics, internalising notions of time via simulation by physical systems (a.k.a. clocks); dynamical systems are identified as the objects of the Eilenberg-Moore category for a certain monad encoding the dynamical structure of the clock. In the context of quantum theory, this idea yields a quantum clock similar to that presented in \cite{QTC-FeynmannSimulatingPhysicsComputers} (where it was called a \textit{time register}), with complex unitary representations of finite groups giving the quantum dynamical systems. Furthermore, monadic dynamics can be used to model quantum circuits, and yield the Feynman's clock construction as a corollary: the construction, originally introduced in \cite{QTC-FeynmannSimulatingPhysicsComputers} and recently revisited in \cite{QTC-FeynmannClockNewVariationalPrinciple}, plays a central role in simulated quantum dynamics, as it provides a way of computing whole histories of states in quantum circuits by finding the ground states of a certain Hamiltonian.

\section{Introduction}

		Categorical Quantum Mechanics aspires to be the formalism of choice for the description of quantum algorithms (see, for example, the work in \cite{CQM-TopologyQuantumAlgorithms}), but currently lacks adequate tools for the description of the dynamical aspects of quantum mechanics. The main aim of this paper is to cover this gap, opening the way for a systematic application of the framework to the field of simulated quantum dynamics. The technical details on representation theory in CQM are collected in the companion work \cite{StefanoGogioso-RepTheoryCQM}, which will be the main, unspoken reference for sections \ref{section_QuantumClocks} and \ref{section_EnergyObservable}.\\

		In section \ref{section_QuantumClocks}, we introduce quantum clocks in the CQM formalism, encode their time translation group structure in a strongly complementary pair of classical structures, and proceed to define quantum dynamical systems governed (or simulated) by them. In section \ref{section_EnergyObservable}, we identify the classical structures as observables for the time states and the energy levels of the clock, define Hamiltonians (seen as comonadic objects) as the adjoints of unitary dynamics (seen as monadic objects), provide the Fourier transform and prove that the eigenvalues of the Hamiltonians find their natural environment in the Pontryagin dual of the time translation group (their group of multiplicative characters). Furthermore, we give the proof of Von Neumann's mean ergodic theorem within the framework, identifying time-averages for the evolution of a quantum dynamical system with the ground states of the Hamiltonian.\\

		In section \ref{section_CCR}, we prove that strong complementarity is the same as canonical commutation, concluding that clocks necessarily possess a conjugate pair of time/energy observables. We define demolition measurements of states, and prove the time/energy uncertainty principle. In section \ref{section_SchrodingersEquation}, we give a categorical definition of Schr\"{o}dinger's equation as the energy perspective of the defining equation for Eilenberg-Moore morphisms. In section \ref{section_FeynmanClock}, we identify quantum circuits as certain composite quantum dynamical systems, define stationary flows of states through them and prove the validity of the Feynman's clock construction. In section \ref{section_Synchronisation}, we define synchronisation of clocks with quantum dynamical systems (the \inlineQuote{real world} counterpart for the quantum dynamics defined in this work), and show how \inlineQuote{forgetting} a clock induces a conserved total energy for the remaining synchronised systems. In section \ref{section_InternalTimeObservables}, we define internal time observables and give necessary and sufficient conditions for their existence. Finally, we give conditions under which a system in a synchronised family can be turned into a clock governing all the other systems.\\

		In section \ref{section_GeneratorsDynamics}, we draw the connection with Stone's Theorem on 1-parameter unitary groups, the result traditionally linking continuous-time dynamics to Hamiltonians and energy. In section \ref{section_GeneralDynamics} we briefly revisit this work in the light of a more general theory of symmetries of quantum systems, and explain how such a generalisation could be achieved in the framework of CQM. 

\newpage
\section{Quantum Clocks}
	\label{section_QuantumClocks}

	\begin{definition}\label{def_QuantumClock} 
		A \textbf{quantum clock} in $\fdHilbCategory$ is an internal monoid $(\;\timeobj\;, \;\raisebox{\SymFontShift}{\hbox{\input{modules/symbols/timemultSym.tex}}}, \;\raisebox{\SymFontShift}{\hbox{\input{modules/symbols/timeunitSym.tex}}})$ acting fully and faithfully on some orthonormal basis $(\ket{t})_{t:\integersMod{N}}$ of $\timeobj$, henceforth the \textbf{time basis}, as a cyclic group $\integersMod{N}$. We will refer to the binary operation $\raisebox{\SymFontShift}{\hbox{\input{modules/symbols/timemultSym.tex}}}$ as the \textbf{time translation}, to the state $\raisebox{\SymFontShift}{\hbox{\input{modules/symbols/timeunitSym.tex}}}$ as the \textbf{initial time} and we will often abuse notation and use $\timeobj$ for both the internal monoid and the underlying Hilbert space. 
	\end{definition}

	Concretely, when talking about a quantum clock we mean a quantum system with a specified time basis as in definition \ref{def_QuantumClock}. But in what sense does such a clock tick time for some dynamical system? In the monadic framework of \cite{StefanoGogioso-MonadicDynamics}, the answer is given by the associated dynamics.

	\begin{definition}\label{def_Dynamics}
		The \textbf{unitary dynamics} for a quantum clock $\timeobj$ are the actions $\hbox{\input{modules/symbols/algebraSym.tex}} \!\! : \SpaceH \tensor \timeobj \epim \SpaceH$ of the clock on quantum systems (i.e. the $\timeobj$-modules) defined by the following three equations:
		\begin{equation}\label{eqn_DynamicsDef1}
			\hbox{\input{modules/pictures/DynamicsDef1.tex}}
		\end{equation}
		\begin{equation}\label{eqn_DynamicsDef2}
			\hbox{\input{modules/pictures/DynamicsDef2.tex}}
		\end{equation}
		\begin{equation}\label{eqn_DynamicsDef3}
			\hbox{\input{modules/pictures/DynamicsDef3.tex}}
		\end{equation}
		where the $\hbox{\input{modules/symbols/antipodeSym.tex}}\!: \timeobj \rightarrow \timeobj$ is the \textbf{time inversion}, acting as group inverse on the time basis. In particular, $\raisebox{\SymFontShift}{\hbox{\input{modules/symbols/timemultSym.tex}}}$ is a unitary dynamic for the quantum clock itself, the \textbf{clock dynamic}.
	\end{definition}

	We will also refer to (unitary) $\timeobj$-modules as \textbf{quantum dynamical systems}, and note \cite{StefanoGogioso-MonadicDynamics} that they are exactly the objects of the Eilenberg-Moore category for the monad associated with the internal monoid $\timeobj$. They are, in particular, complex linear representations of the group $\integersMod{N}$ of time translations for the quantum clock.\\

	The internal structure of a quantum clock is categorically characterised by a strongly complementary pair of structures: a special commutative $\dagger$-Frobenius algebra identifying the time states and a commutative $\dagger$-Frobenius algebra giving the group structure. 

	\begin{theorem}\label{thm_TimeStructure}
		There is a canonical special commutative $\dagger$-Frobenius algebra (henceforth $\dagger$-SCFA) associated with the time basis, which we will call the \textbf{classical structure for time}, and unitary dynamics are controlled unitaries for it:
		\begin{equation}\label{eqn_TimeStructure}
			\hbox{\input{modules/pictures/TimeStructure.tex}}
		\end{equation}
	\end{theorem}
	\begin{proof}
		The relation between $\dagger$-SCFAs and orthonormal bases is proven in \cite{CQM-OrthogonalBases}. The controlled unitary axioms follow from the fact that $\raisebox{\SymFontShift}{\hbox{\input{modules/symbols/timemultSym.tex}}}$, $\hbox{\input{modules/symbols/antipodeSym.tex}}$ and $\raisebox{\SymFontShift}{\hbox{\input{modules/symbols/timediagSym.tex}}}$ satisfy Hopf law.
	\end{proof}

	\begin{theorem}\label{thm_GroupStructure}
		The quantum clock of def \ref{def_QuantumClock} can always be completed to a commutative $\dagger$-Frobenius algebra (henceforth $\dagger$-CFA) with antipode, which we will call the \textbf{group structure}: 
		\begin{equation}\label{eqn_GroupStructure}
			\hbox{\input{modules/pictures/GroupStructure.tex}}
		\end{equation}
		The antipode is formally defined to act by conjugation on the copiables of the $\dagger$-CFA (see thm \ref{thm_EnergyStructure} for confirmation of this). Furthermore, the group structure is strongly complementary dual to the classical structure for time, with antipode the time inversion.
	\end{theorem}
	\begin{proof}
		Section 7.2 of \cite{CQM-KissingerPhdthesis} contains the proof valid for all finite abelian groups, but the proof can be straightforwardly extended to finite non-abelian groups by dropping commutativity for the group structure.
	\end{proof}

	\begin{remark}\label{rmrk_GroupStructSpecial}
		The group structure is \textit{quasi} special, in the sense that composing comultiplication and multiplication gives the identity up to a normalisation factor of $N$. \footnote{Equivalently, all its copiables will have norm $N$. See \cite{CQM-OrthogonalBases}.} Most of the results and definitions involving classical structures will go through for the group structure, up to a normalisation factor of $N$ here and there, so we will often refer to it as a classical structure (term usually reserved to $\dagger$-SCFAs).
	\end{remark}

\section{The energy observable}
	\label{section_EnergyObservable}

	The classical structure for time characterises the time states as its copiables, but what about the group structure? It turns out that it defines something very interesting indeed.

	\begin{theorem}\label{thm_EnergyStructure}
		The \textbf{multiplicative characters} for the time translation group can be identified with the elements $\bra{\chi}$ of the dual space $\timeobj^\star$ satisfying the following equations:
		\begin{equation}\label{eqn_MultCharDef}
			\hbox{\input{modules/pictures/MultCharDef.tex}}
		\end{equation}
		The multiplicative characters form an orthogonal base for $\timeobj^\star$, and the group structure of thm \ref{thm_GroupStructure} is the canonical $\dagger$-CFA copying it. The comonoid $(\,\raisebox{\SymFontShift}{\hbox{\input{modules/symbols/timediagSym.tex}}} \!, \,\raisebox{\SymFontShift}{\hbox{\input{modules/symbols/trivialcharSym.tex}}}\!)$ acts on the multiplicative characters as their pointwise multiplication group, with the antipode as inverse (i.e. acting by conjugation).
	\end{theorem} 
	\begin{proof}
		All rather straightforward; more details and context can be found in \cite{StefanoGogioso-RepTheoryCQM}.
	\end{proof}

	We will refer to the multiplicative characters as the \textbf{energy levels (of the clock)}, because they take the familiar form
	\begin{equation}\label{eqn_EnergyLevels}
		\chi_E \eqdef \ket{t} \mapsto e^{i \, 2\pi / N \, E \cdot t}\text{ for } E : \integersMod{N}
	\end{equation}
	The trivial character $\chi_0 = \ket{t_j} \mapsto 1$ will be referred to as the \textbf{ground energy level}. Also, when understanding the classical structure for time as pointwise multiplication for the energy levels, we will refer to the group structure of thm \ref{thm_GroupStructure} as the \textbf{classical structure for energy}. In terms of associated bases, strong complementarity yields a duality between the time states and energy levels; in terms of group structures, it yields the duality between the time translation group $G = \integersMod{N}$ and its \textbf{Pontryagin dual} $G^\wedge \isom \integersMod{N}$, the pointwise multiplication group of the multiplicative characters of $G$.\\ 

	In CQM, non-degenerate observables are often identified with the $\dagger$-SCFAs canonically associated to their orthonormal bases of eigenstates; in order to accommodate the general, possibly degenerate case, one defines general observables as projector-valued spectra, dependent on a some $\dagger$-CFA for the labelling of projectors. 
	\begin{definition}\label{def_Spectra} 
		The \textbf{observables} for a $\dagger$-CFA with antipode $(\,\raisebox{\SymFontShift}{\hbox{\input{modules/symbols/timemultSym.tex}}}\!,\raisebox{\SymFontShift}{\hbox{\input{modules/symbols/timeunitSym.tex}}}\!,\raisebox{\SymFontShift}{\hbox{\input{modules/symbols/timecomultSym.tex}}}\!,\raisebox{\SymFontShift}{\hbox{\input{modules/symbols/timecounitSym.tex}}}\!,\hbox{\input{modules/symbols/antipodeSym.tex}}\!)$ are the maps $\hbox{\input{modules/symbols/measurementSym.tex}} \!\! : \SpaceH \rightarrow \SpaceH \tensor \timeobj$ which are self-adjoint, idempotent and complete:
		\begin{equation}\label{eqn_MeasurementsDef1}
			\hbox{\input{modules/pictures/MeasurementsDef1.tex}}
		\end{equation}
		\begin{equation}\label{eqn_MeasurementsDef2}
			\hbox{\input{modules/pictures/MeasurementsDef2.tex}}
		\end{equation}
		\begin{equation}\label{eqn_MeasurementsDef3}
			\hbox{\input{modules/pictures/MeasurementsDef3.tex}}
		\end{equation}
	\end{definition}
	
\newpage
	The \textbf{energy observables} are those associated with the classical structure for energy: the following two results use them to define Hamiltonians, and set the Pontryagin dual of the time translation group as the natural setting for the energy levels of a quantum dynamical system. Furthermore, in terms of Pontryagin duality, the Fourier transform is the statement that there is a canonical isomorphism $\Ltwo{G} \isom \Ltwo{G^\wedge}$: in our case the two $\LtwoSym$ spaces correspond to $\timeobj$ with the time basis and $\timeobj^\star$ with the basis of energy levels, yielding eqn \ref{eqn_MultCharResolutionCap} as the definition for the transform. 

	\begin{theorem}\label{thm_DynamicsSpectra}\textbf{(Hamiltonians)}\\
		The adjoints of unitary dynamics are exactly the energy observables. The energy observable associated with a unitary dynamic $\hbox{\input{modules/symbols/algebraSym.tex}}\!\!\!\!$ will be called its \textbf{Hamiltonian} and denoted by $\hbox{\input{modules/symbols/measurementSym.tex}}\!\!\!$. In particular, $\raisebox{\SymFontShift}{\hbox{\input{modules/symbols/timecomultSym.tex}}}\!\!\!$ is the Hamiltonian for the clock dynamic $\raisebox{\SymFontShift}{\hbox{\input{modules/symbols/timemultSym.tex}}}\!\!$.
	\end{theorem}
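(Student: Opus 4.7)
The plan is to show that the assignment $\algebraSym \mapsto \algebraSym^\dagger =: \measurementSym$ is a bijection between unitary dynamics $\SpaceH \tensor \timeobj \to \SpaceH$ for the group structure and observables $\SpaceH \to \SpaceH \tensor \timeobj$ of the classical structure for energy (which is the same underlying $\dagger$-CFA on $\timeobj$). Because $\dagger$ is a strict involution on $\fdHilbCategory$, it suffices to verify that the three defining equations of a unitary dynamic (\ref{eqn_DynamicsDef1}--\ref{eqn_DynamicsDef3}) are pairwise adjoint to the three defining equations of an observable (\ref{eqn_MeasurementsDef1}--\ref{eqn_MeasurementsDef3}); this handles both directions of ``exactly'' simultaneously.

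First I would dagger the associativity-of-action equation \ref{eqn_DynamicsDef1}. Using the $\dagger$-CFA identity $\timemultSym^\dagger = \timecomultSym$ for the group structure together with the contravariance of $\dagger$, the left-hand side becomes $(\id_\SpaceH \tensor \timecomultSym) \circ \measurementSym$ and the right-hand side becomes $(\measurementSym \tensor \id_\timeobj) \circ \measurementSym$, which is precisely the coaction (idempotence) equation \ref{eqn_MeasurementsDef2}. Next I would dagger the unit equation \ref{eqn_DynamicsDef2}; using $\timeunitSym^\dagger = \timecounitSym$, it becomes $(\id_\SpaceH \tensor \timecounitSym) \circ \measurementSym = \id_\SpaceH$, which is the completeness equation \ref{eqn_MeasurementsDef3}. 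Finally, I would dagger the ``unitarity via antipode'' equation \ref{eqn_DynamicsDef3}, using that the antipode of the group structure is self-adjoint (since on the orthonormal time basis $\antipodeSym$ acts as the self-inverse permutation $\ket{t} \mapsto \ket{-t}$), to obtain the self-adjointness equation \ref{eqn_MeasurementsDef1} for $\measurementSym$.

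The step requiring the most care is the third, since \ref{eqn_DynamicsDef3} and \ref{eqn_MeasurementsDef1} mix $\dagger$ with the antipode and involve a swap between the ``medium map'' and ``medium map dag'' box shapes on the two sides of the equation. The crux is to confirm that the diagrammatic $\dagger$ exchanges these two shapes consistently while leaving the antipode-decorated wire in place; together with $\antipodeSym^\dagger = \antipodeSym$, this makes the adjoint of \ref{eqn_DynamicsDef3} coincide on the nose with \ref{eqn_MeasurementsDef1}.

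The final assertion --- that $\timecomultSym$ is the Hamiltonian of the clock dynamic $\timemultSym$ --- is then immediate: the clock dynamic is the multiplication $\timemultSym$ of the group structure itself, and $\timemultSym^\dagger = \timecomultSym$ is one of the defining axioms of the $\dagger$-CFA structure on $\timeobj$.
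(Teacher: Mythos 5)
Your proof is correct and takes the same route as the paper's one-line proof (``Immediate by taking adjoints''). One small correction: the adjoint of eqn~\ref{eqn_DynamicsDef3} is an equation between maps $\SpaceH \to \SpaceH\tensor\timeobj$, whereas eqn~\ref{eqn_MeasurementsDef1} is between maps $\SpaceH\tensor\timeobj \to \SpaceH$, so the two cannot literally ``coincide on the nose''; what you actually need here is the direct equivalence of eqns~\ref{eqn_DynamicsDef3} and~\ref{eqn_MeasurementsDef1} (precompose either one with the antipode on the free $\timeobj$ wire and use $\antipodeSym^2 = \mathrm{id}_\timeobj$), which holds without any dagger and is enough for the bijection.
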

	\begin{proof}
		Immediate by taking adjoints. More details and context can be found in \cite{StefanoGogioso-RepTheoryCQM}.
	\end{proof}

	\begin{theorem}\label{thm_FourierTransform}\textbf{(Fourier transform)}\\
		The energy levels form a partition of the counit, i.e. satisfy $\frac{1}{N} \sum_\chi \!\!\!\!\text{
		\renewcommand{\repElementSymName}{$\chi$}
		\raisebox{\SymFontShift}{\hbox{\input{modules/symbols/multCharacterEffectSym.tex}}}
		}\!\! = \raisebox{\SymFontShift}{\hbox{\input{modules/symbols/timecounitSym.tex}}}$. As a consequence, any quantum dynamical system endowed with a unitary dynamic is covered, via the Hamiltonian, by a complete family of projectors, labelled by the energy levels: 
		\begin{equation}\label{eqn_EnergySpectraResolutionId}
			\hbox{\input{modules/pictures/EnergySpectraResolutionId.tex}}
		\end{equation}
		The energy levels corresponding to non-zero-dimensional projectors will be called the \textbf{energy levels of the system}. In particular, the Hamiltonian for the clock dynamic gives a resolution of the identity in terms of the energy levels:
		\begin{equation}\label{eqn_MultCharResolutionId}
			\hbox{\input{modules/pictures/MultCharResolutionId.tex}}
		\end{equation}
		The transpose of the resolution in eqn \ref{eqn_MultCharResolutionId} gives the \textbf{Fourier transform} for the clock:
		\begin{equation}\label{eqn_MultCharResolutionCap}
			\hbox{\input{modules/pictures/MultCharResolutionCap.tex}}
		\end{equation}
	\end{theorem}
	\begin{proof}
		The details can be found in \cite{StefanoGogioso-RepTheoryCQM}.
	\end{proof}

	\begin{definition}\label{def_EigenstatesEigenvalues}
		The result of thm \ref{thm_FourierTransform} holds more in general for any observable, with copiables of the classical structure labelling the projectors. We can define the \textbf{eigenstates} for the observable to be the eigenstates for the projectors, and the \textbf{eigenvalue} $\ket{\chi}$ for an eigenstate $\ket{\psi}$ to be the copiable labelling the relevant projector:
		\begin{equation}\label{eqn_EigenstatesEigenvalues}
			\hbox{\input{modules/pictures/EigenstatesEigenvaluesDef.tex}}
		\end{equation}
		In particular one obtains the energy eigenstates, with the energy levels as eigenvalues. If the classical structure is a $\dagger$-SCFA, then the normalisation factor of $1/N$ in the partition of the counit should be omitted.
	\end{definition}

	Finally we are in the position to formulate and prove the ergodic theorem for quantum dynamics.
	\begin{theorem}\label{thm_VonNeumannErgodicTheorem}\textbf{(Von Neumann's mean ergodic theorem)}\\
		The time-average for the evolution of a quantum dynamical system is the projector over the ground states:\footnote{Henceforth the eigenstates associated with the ground energy level will be called \textbf{ground states}.}
		\begin{equation}\label{eqn_ErgodicTheorem}
			\hbox{\input{modules/pictures/ErgodicTheorem.tex}}
		\end{equation}
	\end{theorem}
	\begin{proof}
		\begin{equation}\label{eqn_ErgodicTheoremProof}
			\hbox{\input{modules/pictures/ErgodicTheoremProof.tex}}
		\end{equation}
	\end{proof}

\section{Canonical commutation}		
	\label{section_CCR}

	First, let's extend strong complementarity to observables.
	\begin{definition}\label{def_WeylCCRs}
		Let $\hbox{\input{modules/symbols/measurementSymGroupStructColour.tex}}\!\!\!\!$ and $\hbox{\input{modules/symbols/measurementSymClassicalStructColour.tex}}\!\!\!\!$ be observables, on the same quantum system $\SpaceH$, for a pair of strongly complementary classical structures $(\,\raisebox{\SymFontShift}{\hbox{\input{modules/symbols/timemultSym.tex}}}\!,\raisebox{\SymFontShift}{\hbox{\input{modules/symbols/timeunitSym.tex}}}\!,\raisebox{\SymFontShift}{\hbox{\input{modules/symbols/timecomultSym.tex}}}\!,\raisebox{\SymFontShift}{\hbox{\input{modules/symbols/timecounitSym.tex}}}\!)$ and $(\,\raisebox{\SymFontShift}{\hbox{\input{modules/symbols/timematchSym.tex}}}\!,\raisebox{\SymFontShift}{\hbox{\input{modules/symbols/timematchunitSym.tex}}}\!,\raisebox{\SymFontShift}{\hbox{\input{modules/symbols/timediagSym.tex}}}\!,\raisebox{\SymFontShift}{\hbox{\input{modules/symbols/trivialcharSym.tex}}}\!)$, with antipode $\hbox{\input{modules/symbols/antipodeSym.tex}}\!$. We say that $\hbox{\input{modules/symbols/measurementSymGroupStructColour.tex}}\!\!\!\!$ and $\hbox{\input{modules/symbols/measurementSymClassicalStructColour.tex}}\!\!\!\!$ are \textbf{strongly complementary} if they satisfy the following bialgebra equation:
		\begin{equation}\label{eqn_CCRDef}
			\hbox{\input{modules/pictures/CCRDef.tex}}
		\end{equation}
		In particular, $\raisebox{\SymFontShift}{\hbox{\input{modules/symbols/timecomultSym.tex}}}\!\!$ and $\raisebox{\SymFontShift}{\hbox{\input{modules/symbols/timediagSym.tex}}}\!\!$ are strongly complementary.
	\end{definition}

\newpage
	Canonically commuting observables play a special role in quantum mechanics, as they relate symmetry groups with their conserved quantities. The standard definition involves elements of the Lie algebra, but there is an equivalent definition in terms of the 1-parameter unitary groups (unitary dynamics, in our case), the \textbf{Weyl canonical commutation relations}. The following theorem shows that canonical commutation, in the sense of the Weyl relations, is equivalent to strong complementarity.

	\begin{theorem}\label{thm_WeylCCRs}\textbf{(Weyl canonical commutation relation)}\\
		Two observables $\hbox{\input{modules/symbols/measurementSymGroupStructColour.tex}}\!\!\!\!$ and $\hbox{\input{modules/symbols/measurementSymClassicalStructColour.tex}}\!\!\!\!$ are strongly complementary if and only if the associated unitary dynamics are \textbf{canonically commuting}, i.e. satisfy the following equation:
		\begin{equation}\label{eqn_WeylCCRDef}
			\hbox{\input{modules/pictures/WeylCCRDef.tex}}
		\end{equation}
		for any $\ket{\chi}$ and $\ket{t}$ eigenvalues for $\hbox{\input{modules/symbols/measurementSymGroupStructColour.tex}}\!\!\!\!$ and $\hbox{\input{modules/symbols/measurementSymClassicalStructColour.tex}}\!\!\!\!$.
	\end{theorem}
	\begin{proof}
		\begin{equation}\label{eqn_WeylCCRProof}
			\hbox{\input{modules/pictures/WeylCCRProof.tex}}
		\end{equation}
		with application of eqn \ref{eqn_MeasurementsDef1} at the beginning and end.
	\end{proof}

	Thus a clock always has a canonically commuting pair of time/energy observables. Also, we would expect an associated uncertainty principle: to formulate it, we first define demolition measurements of individual states with respect to a given observable.\footnote{The full definition of demolition measurements requires the CPM construction, and can be found in \cite{CQM-QuantumMeasuNoSums} \cite{CQM-QCSnotes}. Here we only need the definition for a given state, so the CPM construction is not necessary.}

	\begin{definition}\label{def_ClassicalMeasurement}
		Given an observable $\hbox{\input{modules/symbols/measurementSym.tex}}\!\!\!\!$, the associated \textbf{demolition measurement} of state $\ket{\psi}$ is the map:
		\begin{equation}\label{eqn_DemolitionMeasurementsDef}
			\hbox{\input{modules/pictures/DemolitionMeasurementsDef.tex}}
		\end{equation}
	\end{definition}

	From eqn \ref{eqn_MeasurementsDef3}, it follows that the demolition measurement gives a distribution on the copiables of the classical structure (exactly a distribution for $\dagger$-SCFAs, or up to a normalisation factor of $1/N$ in the case of the group structure). It is in terms of this distribution that we prove the uncertainty principle for strongly complementary observables: any eigenstate of one of the two spectra is completely unbiased when measured with respect to the other spectrum.

	\begin{theorem}\label{thm_UncertaintyPrinciple}\textbf{(Uncertainty principle)}\\
		Let $\hbox{\input{modules/symbols/measurementSymGroupStructColour.tex}}\!\!\!\!$ and $\hbox{\input{modules/symbols/measurementSymClassicalStructColour.tex}}\!\!\!\!$ be canonically commuting observables. Then the demolition measurement with respect to $\hbox{\input{modules/symbols/measurementSymClassicalStructColour.tex}}\!\!\!\!$ of a normalised eigenstate $\ket{\psi}$ of $\hbox{\input{modules/symbols/measurementSymGroupStructColour.tex}}\!\!\!\!$ gives the uniform distribution. 
	\end{theorem}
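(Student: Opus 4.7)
The plan is to show that the state $D_\psi$ produced on the classical-structure wire by the demolition measurement is proportional to $\timematchunitSym$, the unit of the classical structure, which is precisely the uniform superposition of its copiables. The main tools are the Weyl CCR (Theorem \ref{thm_WeylCCRs}) and the Fourier inversion of Theorem \ref{thm_FourierTransform}. Denote the unitary dynamics adjoint to $\measurementSymGroupStructColour$ and $\measurementSymClassicalStructColour$ by $U^{\circ}$ and $U^{\bullet}$ respectively (Theorem \ref{thm_DynamicsSpectra}).

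First, I would re-express the measurement via adjointness. Pairing $D_\psi$ against each copiable $\ket{r}$ of the group structure returns the matrix coefficient $\bra{\psi}U^{\bullet}_{\ket{r}}\ket{\psi}$, and by Theorem \ref{thm_FourierTransform} these are exactly the Fourier coefficients of $D_\psi$ on the basis of group-structure copiables. It therefore suffices to show that these coefficients vanish on every non-trivial copiable and supply only the normalising constant at the group unit.

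Second, I would combine the eigenstate hypothesis with the Weyl CCR. Taking adjoints of the defining equation $\measurementSymGroupStructColour \ket{\psi} = \ket{\psi}\otimes\ket{\chi}$ yields $U^{\circ}_{\ket{y}}\ket{\psi} = \langle\chi|y\rangle \cdot \ket{\psi}$ for every state $\ket{y}$, making $\ket{\psi}$ a common eigenvector of every red dynamic. Applying equation \ref{eqn_WeylCCRDef} to push $U^{\circ}_{\ket{y}}$ past $U^{\bullet}_{\ket{r}}$ on $\ket{\psi}$ then shows that $U^{\bullet}_{\ket{r}}\ket{\psi}$ is itself a common eigenvector of every $U^{\circ}_{\ket{y}}$, but now with eigenvalue shifted by the factor $\langle r|y\rangle^{-1}$. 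Non-degeneracy of the character pairing, encoded in the bialgebra equation of strong complementarity (Def \ref{def_WeylCCRs}), forces this shifted eigenvalue to differ from the original for at least one $\ket{y}$ whenever $\ket{r}$ is not the group unit; since $\measurementSymGroupStructColour$ is self-adjoint, distinct eigenvalues occupy orthogonal spectral subspaces, and thus $\bra{\psi}U^{\bullet}_{\ket{r}}\ket{\psi} = 0$.

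Finally, Fourier inversion (equation \ref{eqn_MultCharResolutionCap}) reconstructs $D_\psi$ from its Fourier coefficients: the only surviving one sits at the group unit, fixed by $\braket{\psi}{\psi}=1$, yielding $D_\psi = \tfrac{1}{N}\timematchunitSym$ and hence the desired uniform distribution. The principal obstacle I anticipate lies in the non-degeneracy argument of step 2: extracting a strict inequality of eigenvalues from the scalar factor $\langle r|y\rangle^{-1}$ requires a diagrammatic counterpart to the orthogonality of characters, which must be deduced from the strong-complementarity bialgebra equation rather than imported from external representation theory.
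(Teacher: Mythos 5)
Your proof is valid and, although it proves the same statement, it follows a genuinely different path from the paper's. The paper computes the coefficient $\bra{t}D_\psi$ directly for an arbitrary copiable $\ket{t}$ of the structure being measured, inserting the eigenstate condition and then applying the bialgebra law of Definition~\ref{def_WeylCCRs} twice (together with self-adjointness, eqn.~\ref{eqn_MeasurementsDef1}) to peel off two conjugate scalars $\braket{t}{\chi}$, whose product is $1/N$ by mutual unbias of strongly complementary bases \cite{CQM-ZXCalculusSeminal}; the result is manifestly independent of $t$, giving uniformity coefficient by coefficient. You instead work in the dual basis: you pair $D_\psi$ against characters $\bra{r}$ to extract Fourier coefficients, use the \emph{Weyl form} of the CCR (Theorem~\ref{thm_WeylCCRs}, which the paper derives from the bialgebra law) to show $U^\bullet_{\ket{r}}\ket{\psi}$ shifts the $U^\circ$-eigenvalue of $\ket{\psi}$ whenever $r$ is nontrivial, conclude orthogonality from normality of unitaries, and Fourier-invert. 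Your route recasts the argument in the more familiar operator language of common eigenvectors and orthogonal spectral subspaces, which makes the connection to the textbook proof of mutual unbias very transparent; the paper's route stays entirely in the graphical calculus and gets the constant in one pass rather than by inversion. Both proofs ultimately call on essentially the same structural fact: you need the character pairing to be non-degenerate (your flagged obstacle), the paper needs mutual unbias of the two bases — these are equivalent consequences of strong complementarity, and in fact the non-degeneracy you worry about is already available diagrammatically, since distinct copiables of a $\dagger$-CFA are orthogonal, so $\sum_t \braket{r}{t}=\braket{r}{\timematchunitSym}=0$ for any non-unit $\ket{r}$, forcing some $\braket{r}{y}\neq 1$. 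One small point of care: $\bra{r}D_\psi$ is the \emph{conjugate} of $\bra{\psi}U^\bullet_{\ket{r}}\ket{\psi}$ rather than equal to it, but since you only need vanishing this makes no difference.
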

	\begin{proof}
		We prove the statement with respect to an arbitrary eigenvalue $\ket{t}$ of $\hbox{\input{modules/symbols/measurementSymClassicalStructColour.tex}}\!\!\!\!$, assuming it to be self-conjugate without loss of generality. The factor of $N$ on the LHS of eqn \ref{eqn_UncertaintyPrincipleProof1} takes care of the fact that the group structure we are interested in is special only up to a factor of $N$; in the case where both structures are special, one omits the factor.
		\begin{equation}\label{eqn_UncertaintyPrincipleProof1}
			\hbox{\input{modules/pictures/UncertaintyPrincipleProof1.tex}}
		\end{equation}
		\begin{equation}\label{eqn_UncertaintyPrincipleProof2}
			\hbox{\input{modules/pictures/UncertaintyPrincipleProof2.tex}}
		\end{equation}
		\begin{equation}\label{eqn_UncertaintyPrincipleProof3}
			\hbox{\input{modules/pictures/UncertaintyPrincipleProof3.tex}}
		\end{equation}
		The RHS of eqn \ref{eqn_UncertaintyPrincipleProof3} reduces to the identity in the case of our group structure and the classical structure for time (using eqn \ref{eqn_DynamicsDef2} and the fact that multiplicative characters send time states to phases). If both structures are special, then the two conjugate scalars multiply to $1/N$ by mutual unbias of strongly complementary orthonormal bases \cite{CQM-ZXCalculusSeminal}, which compensates the absence of the factor of $N$ on the LHS of eqn \ref{eqn_UncertaintyPrincipleProof1}.
	\end{proof}


\section{Schr\"{o}dinger's Equation}
	\label{section_SchrodingersEquation}

	The trajectories of states in a quantum dynamical system $\hbox{\input{modules/symbols/algebraSym.tex}}\!\!\!\!$ have a nice categorical characterisation as the morphisms of dynamical systems from the clock to $\hbox{\input{modules/symbols/algebraSym.tex}}\!\!\!\!$.

	\begin{definition}\label{def_ConcreteHistories}
		The \textbf{concrete histories} (of states) for a quantum dynamical system are the trajectories of the states under the dynamics:
		\begin{equation}\label{eqn_ConcreteHistoriesEMDef}
			\hbox{\input{modules/pictures/ConcreteHistoriesEMDef.tex}}
		\end{equation}
	\end{definition}
	
	\begin{theorem}\label{thm_ConcreteHistories}
		The concrete histories for a quantum dynamical system are exactly the Eilenberg-Moore morphisms (i.e. morphisms of $\timeobj$-modules) $\raisebox{\SymFontShift}{\hbox{\input{modules/symbols/timemultSym.tex}}}\!\! \rightarrow \hbox{\input{modules/symbols/algebraSym.tex}}$:
		\begin{equation}\label{eqn_EMmorphismDef}
			\hbox{\input{modules/pictures/EMmorphismDef.tex}}
		\end{equation}
	\end{theorem}
	\begin{proof}
		In one direction, every concrete history $\Psi$ of some state $\ket{\psi}$ is an Eilenberg-Moore morphism $\raisebox{\SymFontShift}{\hbox{\input{modules/symbols/timemultSym.tex}}}\!\! \rightarrow \hbox{\input{modules/symbols/algebraSym.tex}}$:
		\begin{equation}\label{eqn_ConcreteHistoriesEMProof1}
			\hbox{\input{modules/pictures/ConcreteHistoriesEMProof1.tex}}
		\end{equation}
		In the other direction, every Eilemberg-Moore morphism $\Psi: \raisebox{\SymFontShift}{\hbox{\input{modules/symbols/timemultSym.tex}}}\!\! \rightarrow \hbox{\input{modules/symbols/algebraSym.tex}}$ is the concrete history of some state $\ket{\psi}$:
		\begin{equation}\label{eqn_ConcreteHistoriesEMProof2}
			\hbox{\input{modules/pictures/ConcreteHistoriesEMProof2.tex}}
		\end{equation}
	\end{proof}

	For concrete histories, seen as trajectories $\Psi: \timeobj \rightarrow \SpaceH$ of states under the dynamics, the focus is on the time basis:
	\begin{equation}
		\Psi = \ket{t} \rightarrow \ket{\psi_t}
	\end{equation}
	On the other hand, shifting the perspective to the energy basis (i.e. taking the Fourier transform) gives us the fundamental equation of quantum dynamics. 

	\begin{theorem}\label{thm_SchrodingersEqn}\textbf{(Schr\"{o}dinger's Equation)}\\
		The (exponentiated version of) Schr\"{o}dinger's equation for a given Hamiltonian $\hbox{\input{modules/symbols/measurementSym.tex}}\!\!\!\!$ can be written, in terms of the energy basis, as the defining equation of Eilenberg-Moore morphisms $\raisebox{\SymFontShift}{\hbox{\input{modules/symbols/timemultSym.tex}}}\!\! \rightarrow \hbox{\input{modules/symbols/algebraSym.tex}}$. The \textbf{solutions of Schr\"{o}dinger's equation} are therefore exactly the concrete histories, seen as functions of the energy levels.
	\end{theorem}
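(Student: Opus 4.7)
The strategy is to Fourier-transform the Eilenberg-Moore morphism characterisation of concrete histories from Theorem \ref{thm_ConcreteHistories}, and to verify that the resulting condition, evaluated on each individual energy level, is precisely the exponentiated Schr\"odinger equation for the given Hamiltonian. The plan is to precompose the EM-morphism equation (eqn \ref{eqn_EMmorphismDef}) with an energy eigenstate on one of its time inputs and simplify using strong complementarity.

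Concretely: fix an energy level $\ket{\chi}$ and define the Fourier component $\hat{\Psi}(\chi) \eqdef \Psi(\ket{\chi})$ of a concrete history $\Psi$. Feed $\ket{\chi}$ into the left time input on both sides of eqn \ref{eqn_EMmorphismDef}, keeping $\ket{t}$ at the right time input. The key diagrammatic fact, which follows from Theorem \ref{thm_EnergyStructure} together with the explicit description of the group multiplication in Theorem \ref{thm_GroupStructure}, is that the group multiplication acts on an energy eigenstate as multiplication by the corresponding character phase: $\mu(\ket{\chi}, \ket{t}) = \chi(t)\,\ket{\chi}$ (up to a sign convention absorbed in the definition of $\chi$). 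Consequently the left-hand side of eqn \ref{eqn_EMmorphismDef} reduces to $\chi(t)\cdot\hat{\Psi}(\chi)$, while the right-hand side becomes the unitary dynamic $U_t$ (adjoint to the Hamiltonian by Theorem \ref{thm_DynamicsSpectra}) applied to $\hat{\Psi}(\chi)$. The EM-morphism equation thus collapses, for each $\chi$, to the eigenvalue identity
\[
U_t\,\hat{\Psi}(\chi) \;=\; \chi(t)\cdot \hat{\Psi}(\chi).
\]
Since $\chi(t) = e^{i\,2\pi E t/N}$ by eqn \ref{eqn_EnergyLevels}, this is precisely the exponentiated Schr\"odinger equation read on the energy-$E$ eigenspace; Fourier inversion via eqn \ref{eqn_MultCharResolutionId} then reconstructs the full trajectory as $\Psi(t) = \tfrac{1}{N}\sum_\chi \chi(t)\,\hat{\Psi}(\chi)$, recovering the familiar form $\ket{\psi_t} = e^{-iHt/\hbar}\ket{\psi_0}$.

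The main obstacle is interpretative rather than computational: the diagrammatic core reduces to a single application of the eigenvalue property of the group multiplication on energy states, but one must verify that the emerging energy-basis condition is indeed the physicists' exponentiated Schr\"odinger equation, under the identification of multiplicative characters with exponentiated energies provided by eqn \ref{eqn_EnergyLevels}. The second sentence of the theorem --- that solutions of Schr\"odinger's equation correspond bijectively to concrete histories viewed as functions of the energy levels --- then follows immediately from Theorem \ref{thm_ConcreteHistories} combined with the invertibility of the Fourier transform guaranteed by Theorem \ref{thm_FourierTransform}.
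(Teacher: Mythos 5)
Your proof is correct and takes essentially the same route as the paper's: you evaluate the Eilenberg--Moore equation on $\ket{\chi}\otimes\ket{t}$, use the character eigenvalue property of the group multiplication to collapse one side to $\chi(t)\,\Psi(\ket{\chi})$, and identify the resulting eigenvalue identity with the exponentiated Schr\"{o}dinger equation --- which is precisely the content of the three-way graphical equation in the paper's proof. The only difference is presentational (you write it out verbally and add the Fourier-inversion remark), not logical.
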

	\begin{proof}
		The exponentiated formulation of Schr\"{o}dinger's equation for the discrete periodic case is:
		\begin{equation}\label{eqn_SchrodingersEqnTraditional}
			U(t)\ket{\psi_E} = e^{i \, 2\pi / N \, E \cdot t} \; \ket{\psi_E}
		\end{equation}
		where $\psi_E$ is a function of energy levels. The formulation above corresponds to the first equality in the following graphical equation:
		\begin{equation}\label{eqn_SchrodingersEqnProof}
			\hbox{\input{modules/pictures/SchrodingersEqnProof.tex}}
		\end{equation}
		As the leftmost and rightmost maps in eqn \ref{eqn_SchrodingersEqnProof} coincide when applied to an arbitrary $\ket{\chi}$ and $\ket{t}$, we conclude that solutions $\Psi$ to Schr\"{o}dinger's equation are exactly the Eilenberg-Moore morphisms $\Psi: \raisebox{\SymFontShift}{\hbox{\input{modules/symbols/timemultSym.tex}}}\!\! \rightarrow \hbox{\input{modules/symbols/algebraSym.tex}}$, seen from the perspective of the energy levels.
	\end{proof}

	\begin{corollary}\label{thm_SchrodingersEqnComponents}
		A solution of Schr\"{o}dinger's equation evaluated at an energy level $\ket{\chi}$ gives a (possibly zero) eigenstate for the Hamiltonian corresponding to that energy level. We shall refer to this eigenstate as the $\ket{\chi}$-\textbf{spectral component} of the solution.
	\end{corollary}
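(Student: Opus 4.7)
The plan is to extract the eigenstate condition for $\ket{\psi_\chi} := \Psi \circ \ket{\chi}$ directly from Schr\"odinger's equation. First, I would read off the left-to-middle equality in eqn \ref{eqn_SchrodingersEqnProof}, specialised to an arbitrary time state $\ket{t}$: the character effect $\bra{\chi}$ at the top of the middle diagram contracts $\ket{t}$ into the scalar $\chi(t)$ of eqn \ref{eqn_EnergyLevels}, yielding
\begin{equation*}
	\algebraSym(\ket{\psi_\chi} \tensor \ket{t}) \; = \; \chi(t) \cdot \ket{\psi_\chi}.
\end{equation*}
In words, $\ket{\psi_\chi}$ is invariant, up to a character-valued phase, under every unitary in the time-translation family governing $\algebraSym$.

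From there I would invoke the spectral-projector decomposition of eqn \ref{eqn_EnergySpectraResolutionId} and realise each projector $P_{\chi'}$ as a character-weighted Fourier average of unitaries, $P_{\chi'} = \tfrac{1}{N}\sum_t \overline{\chi'(t)}\, \algebraSym(\emptyArg \tensor \ket{t})$, obtained by transposing the resolution in eqn \ref{eqn_MultCharResolutionId} against the dynamic $\algebraSym$. Plugging in the previous identity gives
\begin{equation*}
	P_{\chi'}\ket{\psi_\chi} \; = \; \tfrac{1}{N}\sum_t \overline{\chi'(t)}\, \chi(t)\, \ket{\psi_\chi} \; = \; \delta_{\chi',\chi}\, \ket{\psi_\chi}
\end{equation*}
by orthogonality of the multiplicative characters of $\integersMod{N}$. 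Re-assembling through eqn \ref{eqn_EnergySpectraResolutionId} then yields $\measurementSym \ket{\psi_\chi} = \ket{\psi_\chi} \tensor \ket{\chi}$, which is exactly the eigenstate condition of def \ref{def_EigenstatesEigenvalues}. If $\ket{\chi}$ is not among the energy levels of the system then $P_\chi = 0$ and $\ket{\psi_\chi}$ vanishes, covering the \textquotedblleft possibly zero\textquotedblright\ clause of the statement.

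The main obstacle is the graphical unpacking of the spectral projector as a Fourier average of unitary-dynamic slices: the group structure is only quasi-special (rmk \ref{rmrk_GroupStructSpecial}), so the normalising factor of $\tfrac{1}{N}$ must be tracked carefully against the normalisation of the classical structure for time when transposing eqn \ref{eqn_MultCharResolutionId} into a resolution in the presence of $\algebraSym$. Once that bookkeeping is fixed, character orthogonality discharges the rest of the argument without further graphical surgery.
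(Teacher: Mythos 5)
Your argument is correct, but it takes a longer, more computational path than the corollary actually requires. The paper states this as an immediate corollary, and the intended observation is essentially a one-liner: the first equality of eqn~\ref{eqn_SchrodingersEqnProof} says precisely that $U_t\ket{\psi_\chi} = \chi(t)\ket{\psi_\chi}$ for all $t$ (your first displayed equation), which is already the eigenstate condition for the unitary dynamic, and under the adjunction $\measurementSym = \algebraSym^\dagger$ of thm~\ref{thm_DynamicsSpectra} this dualises directly to $\measurementSym\ket{\psi_\chi} = \ket{\psi_\chi}\tensor\ket{\chi}$ (one checks $\langle\phi,s\vert\algebraSym^\dagger\vert\psi_\chi\rangle = \overline{\chi(s)}\langle\phi\vert\psi_\chi\rangle = \langle\phi\vert\psi_\chi\rangle\braket{s}{\chi}$ using unitarity to move $U_s$ to the bra side). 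What you do instead is route through the spectral decomposition: you realise each $P_{\chi'}$ as the character-weighted Fourier average $\tfrac{1}{N}\sum_t\overline{\chi'(t)}U_t$, apply it to $\ket{\psi_\chi}$, and invoke character orthogonality to get $P_{\chi'}\ket{\psi_\chi} = \delta_{\chi,\chi'}\ket{\psi_\chi}$ before reassembling via eqn~\ref{eqn_EnergySpectraResolutionId}. That is correct (and I have checked your $\tfrac{1}{N}$ bookkeeping is consistent with the quasi-specialness of the group structure), and it has the virtue of explicitly localising $\ket{\psi_\chi}$ in a single eigenspace and exposing the \textquotedblleft possibly zero\textquotedblright\ clause directly as $P_\chi = 0$. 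The cost is that it imports the Fourier-average presentation of the projectors, a standard fact but one not stated in the paper, whereas the intended argument needs only the Schr\"odinger equation itself plus the dynamic/Hamiltonian adjunction already in hand.
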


	\begin{remark}\label{rmrk_SchrodingersEqn}
		The solutions to Schr\"{o}dinger's equation as defined in thm \ref{thm_SchrodingersEqn} are exactly the spectral decompositions of the trajectories of states in terms of energy levels. This contrasts with the the more traditional decomposition in terms of a basis of eigenstates of the Hamiltonian: in the case of degenerate Hamiltonians this may look like a bug, but it isn't. The decomposition in terms of a basis of eigenstates is not canonical, and the categorically correct object to consider is the decomposition in terms of the energy eigenspace projectors given by eqn \ref{eqn_EnergySpectraResolutionId}, which is exactly what the solutions to Schr\"{o}dinger's equation, as we defined them, provide.
	\end{remark}

\section{Feynman's Clock}
	\label{section_FeynmanClock}

	Given a quantum circuit composed of unitary gates, the Feynman's clock construction provides a Hamiltonian with ground states characterising the circuit dynamics. More precisely, if $(U_i)_{i=1,...,n-1}$ is some finite sequence of unitary gates on a quantum system $\SpaceH$, then the Hamiltonian has the following ground states:
	\begin{equation}\label{eqn_HistoryStatesEqn}
		\left[\sum\limits_{i=0,...,n} \ket{\psi_i} \tensor \ket{i} \right] \text{ s.t. } U_{i+1} \ket{\psi_i} = \ket{\psi_{i+1}}
	\end{equation}

	We consider the slightly different case of cyclical circuits $(U_t)_{t:\integersMod{N}}$, which can be seen as unitary dynamics for the composite quantum system $\SpaceH \tensor \timeobj$:
	\begin{equation}
		\ket{\psi} \tensor \ket{t} \tensor \ket{\Delta t} \mapsto \left[\left(\prod\limits_{s=t+1}^{t+\Delta t}U_{s}\right)\ket{\psi} \right] \tensor \ket{t+\Delta t}
	\end{equation}
	where the product is expanded to the left.
	\begin{definition}\label{def_Propagator}
		A \textbf{cyclic quantum circuit} is a unitary dynamic $\hbox{\input{modules/symbols/algebraSym.tex}}\!\!\!\!: \SpaceH \tensor \timeobj \tensor \timeobj \epim \SpaceH \tensor \timeobj$ taking the following form when applied to the generator $\ket{1}$ of $\integersMod{N}$:
		\begin{equation}\label{eqn_CyclicQuantumCircuitsDef}
			\hbox{\input{modules/pictures/CyclicQuantumCircuitsDef.tex}}
		\end{equation}
		The $U$ map on the RHS is a controlled unitary (over the classical structure for time) which encodes the gates of the circuit, while the clock $\timeobj$ indexes both the gates and the stages of the circuit.
	\end{definition}

	\begin{remark} The original construction can be obtained by extending the circuit to a cyclical one (e.g. concatenating it with its adjoint, taking $N=2n$) and then considering the (now intermediate) state $\ket{\psi_n}$ as before. 
	\end{remark}

	\begin{definition}\label{def_CircuitFlows}
		The \textbf{stationary flows} of a cyclic quantum circuit $\hbox{\input{modules/symbols/algebraSym.tex}}\!\!\!\!$ are the solutions $\Psi$ to Schr\"{o}dinger's equation taking the following form:
		\begin{equation}\label{eqn_CyclicQuantumCircuitsDynamics}
			\hbox{\input{modules/pictures/CyclicQuantumCircuitsDynamics.tex}}
		\end{equation}
		Then associated \textbf{history states}, i.e. the states satisfying eqn \ref{eqn_HistoryStatesEqn} and used in the original formulation of Feynman's clock, are defined to be the ground energy components:
		\begin{equation}\label{eqn_CyclicQuantumCircuitsFixedPoints}
			\hbox{\input{modules/pictures/CyclicQuantumCircuitsFixedPoints.tex}}
		\end{equation}
	\end{definition}
	
	We are now in the position of proving the fundamental result for the construction: in order to simulate the dynamics of a cyclic quantum circuit (seen as a unitary dynamic on $\SpaceH \tensor \timeobj$), all one has to do is find the ground states for its Hamiltonian.
	\begin{theorem}\label{thm_FeynmanClock}\textbf{(Feynman's Clock)}\\
		Given a cyclic quantum circuit $\hbox{\input{modules/symbols/algebraSym.tex}}\!\!$, the history states associated with the stationary flows are exactly the ground states of the Hamiltonian $\hbox{\input{modules/symbols/measurementSym.tex}}\!\!\!\!$. 
	\end{theorem}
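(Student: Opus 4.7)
The plan is to parametrise both the history states of stationary flows and the ground states of the Hamiltonian by the same algebraic data---namely, the collections $(\ket{\phi_t})_{t \in \integersMod{N}}$ satisfying the history recursion $U_{t+1}\ket{\phi_t} = \ket{\phi_{t+1}}$---and then conclude that the two sets coincide.

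For the stationary-flow side, I would first apply the spider law to eqn \ref{eqn_CyclicQuantumCircuitsDynamics}: the three-legged classical-structure spider connecting the time leg of the parameter $\psi = \sum_s \ket{\phi_s} \tensor \ket{s} \in \SpaceH \tensor \timeobj$ to the input and output times forces $\Psi(\ket{t}) = \ket{\phi_t}\tensor\ket{t}$. Feeding $\timematchunitSym \!\!= \sum_t \ket{t}$ into this map then collapses the diagram of eqn \ref{eqn_CyclicQuantumCircuitsFixedPoints} to $\sum_t \ket{\phi_t} \tensor \ket{t} = \psi$, so the history state of the stationary flow is literally the parameter $\psi$ itself. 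Requiring $\Psi$ to be an Eilenberg-Moore morphism (Theorem \ref{thm_ConcreteHistories}), equivalently a solution of Schr\"odinger's equation, translates by direct computation against $\algebraSym$ from Definition \ref{def_Propagator} into exactly the history recursion above. For the ground-state side, I would combine Theorem \ref{thm_SchrodingersEqn} with the fact that $\chi_0(t) \equiv 1$: a state $\ket{\Phi}$ is a ground state iff $U(t)\ket{\Phi} = \ket{\Phi}$ for every $\ket{t}$, where $U(t) := \algebraSym(-,\ket{t})$. Unpacking $U(1)$ from Definition \ref{def_Propagator} and comparing components of $U(1)\ket{\Phi} = \ket{\Phi}$ with $\ket{\Phi} = \sum_s \ket{\phi_s}\tensor\ket{s}$ again yields the same recursion, and induction extends invariance from $U(1)$ to all $U(t)$.

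Both sides therefore parametrise the same subspace of $\SpaceH \tensor \timeobj$, giving the claimed equality. The main obstacle is the index bookkeeping when applying the cyclic circuit's controlled unitary to a non-simple tensor: one must peel the control $\ket{s}$ off the time classical structure before each individual gate $U_{s+1}$ becomes visible, and the reindexing required to identify $\sum_s U_{s+1}\ket{\phi_s}\tensor\ket{s+1}$ with $\sum_{s'} \ket{\phi_{s'}}\tensor\ket{s'}$ must be handled consistently on both sides of the argument. None of the individual steps is conceptually deep, but keeping the labels straight through the diagrammatic manipulations requires some care.
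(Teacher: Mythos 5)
Your argument is correct but follows a noticeably different route from the paper's. The paper handles the direction history state $\Rightarrow$ ground state in one line by citing Corollary~\ref{thm_SchrodingersEqnComponents} (a history state is the $\chi_0$-spectral component of a solution to Schr\"odinger's equation, hence a ground state), and handles the converse by constructing $\Psi$ from a ground state via state-operator duality and then verifying the Eilenberg-Moore equation \emph{graphically}, using the controlled-unitary form of the cyclic circuit, strong complementarity, and the fixed-point property of ground states in sequence. You instead pass through an explicit basis expansion and use the history recursion $U_{t+1}\ket{\phi_t}=\ket{\phi_{t+1}}$ as a common parametrisation, showing separately that the stationary-flow/Eilenberg-Moore condition and the ground-state condition each reduce to that recursion. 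This buys a more elementary, component-level verification that is easy to check by hand, at the cost of leaving the diagrammatic calculus around which the paper's framework is built. One small attribution point: the auxiliary characterisation you invoke---a state is a ground state iff it is a simultaneous fixed point of every $U(t)$, and by cyclicity iff it is fixed by $U(1)$---is correct, but it comes from the spectral resolution of eqn~\ref{eqn_EnergySpectraResolutionId} together with the eigenstate definition of eqn~\ref{eqn_EigenstatesEigenvalues} (or, for the nontrivial direction, from the ergodic Theorem~\ref{thm_VonNeumannErgodicTheorem}), rather than from Theorem~\ref{thm_SchrodingersEqn} as you cite. Both your proof and the paper's use cyclicity to reduce verification to the single generator $\ket{1}$.
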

	\begin{proof}
		In one direction: a history state is the ground energy spectral component of a solution to Schr\"{o}dinger's equation, and thus by corollary \ref{thm_SchrodingersEqnComponents} it is a ground state for the Hamiltonian. In the other direction: given a ground state $\ket{\psi}$ for the Hamiltonian, we will construct a $\Psi$ map in the form of eqn \ref{eqn_CyclicQuantumCircuitsDynamics} (using state-operator duality) and prove that it satisfies eqn \ref{eqn_ConcreteHistoriesEMDef}, i.e. that it is a solution to Schr\"{o}dinger's equation. As the time translation group for the clock is cyclic, it suffices to prove the result for the generator $\ket{1}$ (and then apply eqn \ref{eqn_DynamicsDef1}):
		\begin{equation}\label{eqn_FeynmanClockProof1}
			\hbox{\input{modules/pictures/FeynmanClockProof1.tex}}
		\end{equation}
		\begin{equation}\label{eqn_FeynmanClockProof2}
			\hbox{\input{modules/pictures/FeynmanClockProof2.tex}}
		\end{equation}
		The first and penultimate step use the definition of cyclic quantum circuit, while the last step uses the fact that $\ket{\psi}$ is a ground state for the Hamiltonian, and thus a fixed point for the unitary dynamic. The step between the two eqns is given by strong complementarity.
	\end{proof}

\section{Clock-system Synchronisation}
	\label{section_Synchronisation}

	Up to this point, we have described the action of a clock on physical systems in terms of unitary dynamics: although appealing from an algebraic point of view, this formulation lacks an immediate physical interpretation. To fix this, we shift point of view from the \textit{action} of a clock system on another physical system to the \textbf{synchronicity} of the two systems: instead of saying that a time state $\ket{t}$ of the clock is mapped by a concrete history to a state $\ket{\psi_t}$, we will say that whenever the clock is measured to be in time state $\ket{t}$, the system \inlineQuote{collapses} into state $\ket{\psi_t}$. This perspective leads us to the following definition.

	\begin{definition}\label{def_SynchronisedClockSystemPair}
		Let $\hbox{\input{modules/symbols/algebraSym.tex}}\!\!\!\!: \SpaceH \tensor \timeobj \epim \SpaceH$ be a quantum dynamical system, governed by clock $\timeobj$. Then the corresponding \textbf{synchronised clock-system pair} with intial state $\ket{\psi}$ is the following state of $\SpaceH \tensor \timeobj$:
		\begin{equation}\label{eqn_SyncClockSystemPair}
			\hbox{\input{modules/pictures/SyncClockSystemPair.tex}}
		\end{equation}
	\end{definition}

	When talking about the evolution of a quantum dynamical system, seen as a synchronised clock-system pair, what is usually meant by the energy observable is the Hamiltonian for the system, and what is usually meant by the time observable is the time observable for the clock. This simple observation, coupled with the following result, answers a 1930s conundrum by Schr\"{o}dinger (see \cite{Msc-TimeQTProspectus} p6).

	\begin{theorem}\label{thm_SchrodingersConundrum}
		In a synchronised clock-system pair the time observable for the clock and the Hamiltonian for the system commute. As a consequence, it is possible to simultaneously measure the time in the evolution of a system and its energy.
	\end{theorem}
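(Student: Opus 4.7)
The plan is to unpack what each of the two observables really is as a morphism on the joint space $\SpaceH \tensor \timeobj$, and then observe that they live on disjoint tensor factors, so that commutativity follows immediately from the interchange law of the symmetric monoidal structure of $\fdHilbCategory$.

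First I would spell out that the ``time observable for the clock'' is the observable associated with the classical structure for time $(\timediagSym,\trivialcharSym)$, which is a morphism $\timeobj \to \timeobj$ with an extra label wire in $\timeobj$; when applied to the synchronised clock-system pair it acts as $\operatorname{id}_\SpaceH \tensor \measurementSymClassicalStructColour$. Next I would recall from Theorem~\ref{thm_DynamicsSpectra} that the ``Hamiltonian for the system'' is the adjoint of the unitary dynamic $\algebraSym\!\!$ governing $\SpaceH$, i.e.\ an observable $\measurementSym\!\! : \SpaceH \to \SpaceH$ with label wire in $\timeobj^\star$ (the energy space); when applied to the synchronised pair it acts as $\measurementSym \tensor \operatorname{id}_\timeobj$. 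The crucial point to emphasise is that, in contrast to the energy observable of the clock itself (which by Theorems~\ref{thm_WeylCCRs} and \ref{thm_UncertaintyPrinciple} is strongly complementary to the time observable and therefore cannot be jointly measured with it), here the relevant energy observable is an operator on an entirely different factor, namely $\SpaceH$.

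The commutativity statement then reduces to the bifunctoriality equation
\[
(\operatorname{id}_\SpaceH \tensor \measurementSymClassicalStructColour) \circ (\measurementSym \tensor \operatorname{id}_\timeobj) \;=\; \measurementSym \tensor \measurementSymClassicalStructColour \;=\; (\measurementSym \tensor \operatorname{id}_\timeobj) \circ (\operatorname{id}_\SpaceH \tensor \measurementSymClassicalStructColour),
\]
which holds in any symmetric monoidal category. Since the two observables admit a common diagonalisation (namely the tensor-product one), they can be jointly measured; the outcome is a joint distribution on pairs $(\chi,t)$ of an energy level of $\SpaceH$ and a time state of $\timeobj$, obtained by performing either measurement first.

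The main obstacle here is conceptual rather than technical: one must make sure the reader distinguishes the Hamiltonian of the \emph{system} from the Hamiltonian of the \emph{clock}. The latter is strongly complementary to the time observable of the clock and would prevent simultaneous measurement (this is the content of Schr\"odinger's original conundrum); the former lives on a separate Hilbert-space factor and the apparent conflict dissolves. I would therefore devote a brief remark after the proof to underline that the \emph{synchronisation} perspective of Definition~\ref{def_SynchronisedClockSystemPair} is precisely what makes this distinction natural, and that the correlation between the time outcome $\ket{t}$ and the spectral components of the system state is carried by the entangled structure of the synchronised pair itself, not by any non-commutativity of the two measurements.
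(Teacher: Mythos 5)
Your proof is correct, and since the paper states this theorem without giving an explicit proof, your argument supplies precisely the justification the paper leaves implicit. The paragraph preceding the theorem already signals that the resolution of Schr\"{o}dinger's conundrum lies in distinguishing the \emph{system's} Hamiltonian (acting on the $\SpaceH$ factor) from the \emph{clock's} Hamiltonian (which would be strongly complementary to the time observable and hence incompatible), and your appeal to bifunctoriality of the monoidal tensor for morphisms supported on disjoint factors of $\SpaceH \tensor \timeobj$ is exactly the correct formalisation of that observation.
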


	We now move on to model multiple systems synchronised with the same clock, by defining a notion of separability for dynamical systems.

	\begin{definition}\label{def_SeparableDynamic}
		We say that a concrete history for a quantum dynamical system $\hbox{\input{modules/symbols/algebraSym.tex}}\!\!\!\!: \SpaceH \tensor \timeobj \epim \SpaceH$ is \textbf{separable over the decomposition $\SpaceH = \SpaceH_1 \tensor ... \tensor \SpaceH_M$} if it takes the form:
		\begin{equation}\label{eqn_SeparableDynamic}
			\hbox{\input{modules/pictures/SeparableDynamic.tex}}
		\end{equation}
		where $(\alpha_j:\SpaceH_j \tensor \timeobj \epim \SpaceH_j)_{j=1,...,M}$ is a family of quantum dynamical systems all governed by clock $\timeobj$.
	\end{definition}

	Def \ref{def_SynchronisedClockSystemPair}, applied to an appropriately separable dynamical system, gives the desired synchronisation of multiple systems with a clock. But what happens if we forget about the clock, for example by fixing its energy state? We get a synchronised family of dynamical systems, with a well defined total energy. 

	\begin{definition}\label{def_SynchronisedFamily}
		Let $(\alpha_j:\SpaceH_j \tensor \timeobj \epim \SpaceH_j)_{j=1,...,M+1}$ be a family of quantum dynamical systems governed by some common clock $\timeobj$. Then the corresponding \textbf{synchronised family $(\alpha_j)_j$ with initial states $\ket{\psi_j}_j$ and total energy $\ket{\chi}$} is the following state in $\SpaceH_1 \tensor ... \tensor \SpaceH_{M+1}$: 
		\begin{equation}\label{eqn_SyncDynamicalSystems}
			\hbox{\input{modules/pictures/SyncDynamicalSystems.tex}}
		\end{equation}
		If $\alpha_{M+1} = \raisebox{\SymFontShift}{\hbox{\input{modules/symbols/timemultSym.tex}}}$, $\ket{\psi_{M+1}} = \raisebox{\SymFontShift}{\hbox{\input{modules/symbols/timeunitSym.tex}}}$ and $\bra{\chi} = \raisebox{\SymFontShift}{\hbox{\input{modules/symbols/trivialcharSym.tex}}}\!$, then this is equivalent to a synchronised system-clock pair where both the system $\hbox{\input{modules/symbols/algebraSym.tex}}\!\!\!\!: \SpaceH \tensor \timeobj \epim \SpaceH$ and the initial state $\ket{\psi}$ are separable over the decomposition $\SpaceH = \SpaceH_1 \tensor ...\tensor \SpaceH_M$.
	\end{definition}

	\begin{theorem}\label{thm_ConservationEnergy1}
		Consider a synchronised system-clock pair where both the system $\hbox{\input{modules/symbols/algebraSym.tex}}\!\!\!\!: \SpaceH \tensor \timeobj \epim \SpaceH$ and the initial state $\ket{\psi}$ are separable over the decomposition $\SpaceH = \SpaceH_1 \tensor ...\tensor \SpaceH_M$. Measuring the clock energy level to be $\ket{-\chi}$ yields a synchronised family $(\alpha_j)_{j=1,...,M}$ with initial states $\ket{\psi_j}_j$ and total energy $\ket{\chi}$. Concretely, as a state of $\SpaceH = \SpaceH_1 \tensor ...\tensor \SpaceH_M$, this results in the following superposition of energy eigenstates:
		\begin{equation}\label{eqn_ConservationEnergy1}
			\hbox{\input{modules/pictures/ConservationEnergy1.tex}}
		\end{equation}
	\end{theorem}
	\begin{proof}
		Consider a synchronised system-clock pair like that of eqn \ref{eqn_SyncClockSystemPair}, with the system in the form of eqn \ref{eqn_SeparableDynamic}. Then measuring the clock in the energy level $\ket{-\chi}$ results in the situation of eqn \ref{eqn_SyncDynamicalSystems}, and one obtains the superposition in eqn \ref{eqn_ConservationEnergy1} by observing that $\raisebox{\SymFontShift}{\hbox{\input{modules/symbols/timediagSym.tex}}}$ acts as a comultiplication for energy levels.
	\end{proof}

	Furthermore, the total energy is conserved, in the sense that measuring one of the systems to be in a specific energy eigenstates simply subtracts from the total energy of the remaining systems.

	\begin{theorem}\label{thm_ConservationEnergy2}\textbf{(Conservation of total energy)}\\
		Given a synchronised family $(\alpha_j)_{J=1,...,M+1}$ with initial states $\ket{\psi_j}_j$ and total energy $\ket{\chi}$, measuring the system $\alpha_{M+1}$ in an energy eigenstate, not orthogonal to $\psi_{M+1}$ and corresponding to energy level $\ket{\chi'}$, yields (possibly up to scaling/phase) a synchronised family $(\alpha_j)_{j=1,...,M}$ with initial states $\ket{\psi_j}_j$ and total energy $\ket{\chi-\chi'}$.
	\end{theorem}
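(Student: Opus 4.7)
The plan is to compute, by diagrammatic rewriting applied to the synchronised family diagram of def \ref{def_SynchronisedFamily}, the effect of projecting the $\SpaceH_{M+1}$ output wire of the $(M+1)$-system family onto the chosen energy eigenstate $\ket{\phi}$, and to show that the resulting state of $\SpaceH_1 \tensor \ldots \tensor \SpaceH_M$ coincides, up to a non-zero scalar, with the synchronised family for the first $M$ systems with total energy $\ket{\chi - \chi'}$.

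First I would unpack the measurement using the eigenstate property: by def \ref{def_EigenstatesEigenvalues} combined with thm \ref{thm_DynamicsSpectra}, the Hamiltonian of $\alpha_{M+1}$ sends $\ket{\phi}$ to $\ket{\phi} \tensor \ket{\chi'}$, so taking adjoints yields the key identity $\bra{\phi} \circ \alpha_{M+1} = \bra{\phi} \tensor \bra{\chi'}$ as a map $\SpaceH_{M+1} \tensor \timeobj \to \complexs$. Plugging this into the diagram of eqn \ref{eqn_SyncDynamicalSystems} for $M+1$ systems, the $(M+1)$-th arm collapses to the scalar $\langle\phi|\psi_{M+1}\rangle$, which is non-zero by the non-orthogonality hypothesis, together with an insertion of $\bra{\chi'}$ on the $(M+1)$-th output leg of the $\timediagSym$ tree that distributes the total energy input $\ket{\chi}$.

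Next I would dispose of this insertion using thm \ref{thm_EnergyStructure}: since the classical structure $\timediagSym$ acts as the comultiplication dual to the pointwise multiplication group of characters, the contraction $(\id{\timeobj} \tensor \bra{\chi'}) \circ \timediagSym \circ \ket{\chi}$ is, up to standard normalisation, exactly $\ket{\chi - \chi'}$. What remains of the diagram is then manifestly eqn \ref{eqn_SyncDynamicalSystems} applied to the truncated family $(\alpha_j)_{j=1,\ldots,M}$ with initial states $\ket{\psi_j}_j$ and total energy input $\ket{\chi - \chi'}$, which is the claim.

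The main bookkeeping obstacle will be tracking normalisation: factors of $N$ come in from the quasi-special nature of the group structure (rmrk \ref{rmrk_GroupStructSpecial}), and an overall complex coefficient $\langle\phi|\psi_{M+1}\rangle$ survives from the $(M+1)$-th arm; these are precisely what the ``up to scaling/phase'' clause in the statement is designed to absorb, so no further conceptual work is needed.
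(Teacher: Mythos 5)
Your proposal is correct and takes essentially the same route as the paper: the paper's proof simply states that it proceeds as in Theorem \ref{thm_ConservationEnergy1}, but additionally applies eqn \ref{eqn_EigenstatesEigenvalues} to convert the energy eigenstate into its energy-level effect $\bra{\chi'}$ --- exactly the adjoint identity $\bra{\phi} \circ \alpha_{M+1} = \bra{\phi} \tensor \bra{\chi'}$ you derive from thm \ref{thm_DynamicsSpectra}. One small note: the comultiplication contracted against $\bra{\chi'}$ is the genuinely special $\dagger$-SCFA for time, not the quasi-special group structure of rmrk \ref{rmrk_GroupStructSpecial}, so $(\id{\timeobj} \tensor \bra{\chi'}) \circ \timediagSym \circ \ket{\chi} = \ket{\chi-\chi'}$ holds on the nose and the only surviving scalar is $\braket{\phi}{\psi_{M+1}}$, which the \inlineQuote{up to scaling/phase} clause absorbs.
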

	\begin{proof}
		Similar to that of thm \ref{thm_ConservationEnergy1}, using eqn \ref{eqn_EigenstatesEigenvalues} to turn the energy eigenstate into the corresponding energy level.
	\end{proof}

\section{Internal time observables}
	\label{section_InternalTimeObservables}

	We have seen that synchronisation allows us to give a physical description of the action of a clock on quantum dynamical systems, and that forgetting the clock imposes a conserved total energy on the remaining synchronised systems. One question remains open: under which conditions can one of these remaining synchronised systems turn itself into an \inlineQuote{internal} clock governing the other systems? This is an important question, for reasons explained in the end by remark \ref{rmrk_ContinuousSystems}, and this section is devoted to answering it.\\

	To begin with, a clock has a strongly complementary pair of time/energy observables, so we need to define these for our candidate internal clock. 
	\begin{definition}\label{def_InternalTimeObservable}
		Given a quantum dynamical system $\hbox{\input{modules/symbols/internalalgebraSym.tex}}\!\!\!\!: \SpaceH \tensor \timeobj \epim \SpaceH$, an \textbf{internal time observable} is a $\dagger$-SCFA $\internaltimestructure$ on $\SpaceH$ s.t. $\hbox{\input{modules/symbols/internalalgebraSym.tex}}\!\!\!\!$ is a classical map $(\internaltimecomonoid) \times (\raisebox{\SymFontShift}{\hbox{\input{modules/symbols/timediagSym.tex}}}\!,\raisebox{\SymFontShift}{\hbox{\input{modules/symbols/trivialcharSym.tex}}}\!) \epim (\internaltimecomonoid)$, necessarily corresponding to a classical group action of the time translation group on the corresponding orthonormal basis.
	\end{definition}

	\begin{definition}\label{def_NondegenerateSystem}
		We say that a Hamiltonian $\hbox{\input{modules/symbols/internalmeasurementSym.tex}}\!\!\!\!: \SpaceH \rightarrow \SpaceH \tensor \timeobj$ is \textbf{non-degenerate} if there is a unique $\dagger$-SCFA $\internalenergystructure$, the \textbf{(normalised) internal energy observable}, making $\hbox{\input{modules/symbols/internalmeasurementSym.tex}}\!\!\!\!$ into a classical map $\internalenergycomonoid \rightarrow \internalenergycomonoid \times (\raisebox{\SymFontShift}{\hbox{\input{modules/symbols/timecomultSym.tex}}}\!,\raisebox{\SymFontShift}{\hbox{\input{modules/symbols/timecounitSym.tex}}}\!)$. As it fits better with our original framework, we will also use $\internalenergystructure$ to denote the \textbf{internal energy observable}, having the same copiables as the normalised internal energy structure, but all scaled by a factor of $\Dim{\SpaceH}$.
	\end{definition}

	\begin{theorem}\label{thm_DemolitionHamiltonian}
		Let $\hbox{\input{modules/symbols/internalmeasurementSym.tex}}\!\!\!\!$ be a non-degenerate Hamiltonian with internal energy observable $\internalenergystructure$. Then the Hamiltonian and its composition with the counit $\raisebox{\SymFontShift}{\hbox{\input{modules/symbols/internaltimecounitSym.tex}}}$ are both classical injections. We shall refer to the latter as the \textbf{demolition Hamiltonian}.
	\end{theorem}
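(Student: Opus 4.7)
The plan is to verify the two claims in turn: first that each of the Hamiltonian and its post-composition with the counit is a comonoid homomorphism between the relevant $\dagger$-SCFAs, and second that each acts injectively on the copiables of the internal energy observable. Since classical maps between $\dagger$-SCFAs correspond bijectively to set-functions between their copiable bases, the injection property reduces to checking injectivity at the level of these bases.

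For the Hamiltonian itself, the classical-map property $\internalenergycomonoid \to \internalenergycomonoid \times (\timecomultSym\!,\timecounitSym\!)$ is exactly the defining condition of non-degeneracy (def~\ref{def_NondegenerateSystem}). To obtain injectivity I would invoke def~\ref{def_EigenstatesEigenvalues} together with eqn~\ref{eqn_EigenstatesEigenvalues}: each copiable $\ket{\psi_i}$ of the internal energy observable is by construction an energy eigenstate, and $\internalmeasurementSym$ sends it to the pair $(\ket{\psi_i}, \ket{\chi_i})$ with $\ket{\chi_i}$ the corresponding energy level. Distinct source copiables are already distinguished by the first component of the target pair, so the induced set-function between copiable bases is tautologically injective, giving a classical injection.

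For the demolition Hamiltonian $(\internaltimecounitSym \tensor \id{\timeobj}) \circ \internalmeasurementSym : \SpaceH \to \timeobj$, the classical-map property follows by composition, since $\internaltimecounitSym$ is itself a comonoid homomorphism to the trivial comonoid on the monoidal unit, and after tensoring with $\id{\timeobj}$ the target of the composite collapses to the classical structure for energy on $\timeobj$. On copiables the map sends each eigenstate $\ket{\psi_i}$ to its energy level $\ket{\chi_i}$, so injectivity amounts to showing that distinct internal energy eigenstates carry distinct energy levels. This is where the full force of non-degeneracy is needed: if two distinct copiables $\ket{\psi_1}, \ket{\psi_2}$ shared an energy level $\ket{\chi}$, then the projector $P_{\chi}$ appearing in the resolution of eqn~\ref{eqn_EnergySpectraResolutionId} would have rank at least two, and any orthonormal basis of its range could replace $\ket{\psi_1},\ket{\psi_2}$ among the copiables, producing a distinct $\dagger$-SCFA still making $\internalmeasurementSym$ classical; this contradicts the uniqueness clause of def~\ref{def_NondegenerateSystem}.

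The main obstacle, as I see it, is precisely this last step: one must translate ``unique $\dagger$-SCFA'' into the concrete statement ``no two distinct energy eigenstates carry the same energy level'' in a diagrammatic fashion, rather than retreating to the underlying Hilbert-space picture. Once this is formalised via the $\dagger$-SCFA/orthonormal-basis correspondence of~\cite{CQM-OrthogonalBases}, the remaining assertions are immediate from the definitions and the elementary fact that compositions of classical maps are classical.
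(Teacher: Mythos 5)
Your proposal is correct and follows essentially the same route as the paper's own (very terse) proof, which simply observes that non-degeneracy in the sense of def~\ref{def_NondegenerateSystem} is equivalent to all energy eigenspaces being $1$-dimensional, after which both injection claims are immediate. You have, usefully, unpacked the two things the paper leaves implicit: that the Hamiltonian itself is tautologically injective because the first tensor factor of its image already distinguishes copiables, and that uniqueness of the $\dagger$-SCFA forces distinct energy levels via the rank-$\geq 2$ contradiction argument. That contradiction step is exactly the content of the paper's ``if and only if all its eigenspaces are 1-dimensional,'' so there is no gap; you have just made the equivalence explicit where the paper asserts it.
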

	\begin{proof}
		A Hamiltonian is non-degenerate, in the sense of def \ref{def_NondegenerateSystem}, if and only if all its eigenspaces are 1-dimensional. In the latter case, it is immediate to see that the desired maps are classical injections.
	\end{proof}

	\begin{theorem}\label{thm_InternalTime} \textbf{(Internal time observable)}\\
		Let $\hbox{\input{modules/symbols/internalalgebraSym.tex}}\!\!\!\!$ be a non-degenerate quantum dynamical system with time translation group $G \isom \integersMod{N}$, and some internal energy observable $\internalenergystructure$. A strongly complementary pair $(\internalenergymonoid,\internaltimecomonoid)$, with $\internaltimestructure$ an internal time observable, exists if and only if the demolition Hamiltonian $h$, seen as a classical injection $\internalenergycomonoid \monom (\raisebox{\SymFontShift}{\hbox{\input{modules/symbols/timecomultSym.tex}}}\!,\raisebox{\SymFontShift}{\hbox{\input{modules/symbols/timecounitSym.tex}}}\!)$, has image equal to a subgroup of the group $G^\wedge \isom \integersMod{N}$ of multiplicative characters. When it exists, such an internal time observable is unique and corresponds to a quotient group $G/K$; $\hbox{\input{modules/symbols/internalalgebraSym.tex}}\!\!\!\!$ is the left regular action of $G$ on $G/K$, translating external time states to internal time states; the demolition Hamiltonian translates internal energy levels (the characters of the quotient $G/K$) to external energy levels (characters of $G$).
	\end{theorem}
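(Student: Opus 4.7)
The plan is to prove both directions of the biconditional by leveraging the classification of strongly complementary pairs in $\fdHilbCategory$ in terms of finite abelian groups \cite{CQM-KissingerPhdthesis}: such a pair on $\SpaceH$ amounts to the data of a finite abelian group $G'$ together with an identification $\SpaceH \isom \complexs[G']$ under which the internal time basis is the group basis of $G'$ and the internal energy eigenstates are the characters of $G'$.

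For the forward direction, I would assume an internal time observable exists and let $G'$ be the associated group from the SC pair classification. Since $\internalalgebraSym\!$ is required by def \ref{def_InternalTimeObservable} to be a classical map with respect to $(\internaltimecomonoid)$ \emph{and} a group action, it must act on the internal time basis by translations along some group homomorphism $\pi: G \rightarrow G'$; non-degeneracy of the Hamiltonian (which produces distinct eigenvalues on a basis of size $\Dim{\SpaceH} = |G'|$) then forces $\pi$ to be surjective, so $G' \isom G/K$ with $K = \ker \pi$. Combining this with Schr\"{o}dinger's equation (thm \ref{thm_SchrodingersEqn}), an internal energy eigenstate $\ket{\psi}$ realised as a character $\chi'$ of $G'$ evolves as $g \cdot \ket{\psi} = \chi'(\pi(g)) \ket{\psi}$; on the other hand the Hamiltonian description of thm \ref{thm_DynamicsSpectra} gives $g \cdot \ket{\psi} = \chi(g) \ket{\psi}$ with $\chi = h(\ket{\psi}) \in G^\wedge$. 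Consistency forces $h(\ket{\psi}) = \chi' \circ \pi$, so the image of $h$ is exactly the annihilator subgroup $\{\chi \in G^\wedge : \chi|_K = 1\} \leq G^\wedge$.

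For the reverse direction, I would start from the assumption that the image of $h$ is a subgroup $H \leq G^\wedge$. By Pontryagin duality applied to $G \isom \integersMod{N}$, this subgroup takes the form $H \isom (G/K)^\wedge$ for the unique annihilator $K = \{g \in G : \chi(g) = 1 \text{ for all } \chi \in H\} \leq G$, and $|H| = |G/K| = \Dim{\SpaceH}$. I would use $h$ to identify $\SpaceH$ with $\complexs[(G/K)^\wedge]$, and define the internal time observable as the unique $\dagger$-SCFA on $\SpaceH$ whose copiables are the delta functions on $G/K$; strong complementarity with the internal energy observable then follows from thm \ref{thm_GroupStructure} applied to the group $G/K$, and $\internalalgebraSym\!$ is necessarily the left regular action of $G/K$ on itself, forced by $h$ landing in characters trivial on $K$ combined with the Schr\"{o}dinger perspective. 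Uniqueness of the internal time observable follows from uniqueness of $K$ given $H$.

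The main obstacle I expect is the forward direction, specifically pinning down the $G$-action on the internal time basis as translation along a quotient homomorphism rather than an arbitrary classical group action (which is all def \ref{def_InternalTimeObservable} asks for a priori). This requires simultaneously exploiting that internal energy eigenstates arise as characters of $G'$ under the SC pair classification and as images in $G^\wedge$ under the classical injection $h$; reconciling these two descriptions via Schr\"{o}dinger's equation is what forces the action to factor as $G \twoheadrightarrow G/K \hookrightarrow \operatorname{Aut}(\complexs[G/K])$ and simultaneously identifies the image of $h$ with a subgroup.
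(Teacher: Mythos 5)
Your proposal is correct and substantively the same as the paper's sketch: both directions run through Pontryagin duality and the identification of the demolition Hamiltonian with an injective group homomorphism $(G/K)^\wedge \monom G^\wedge$ (dually, $h^\dagger: G \epim G/K$ surjective). The one place you take a slightly more circuitous route is the forward direction. You establish that the $G$-action on the internal time basis is by translations along some $\pi: G \to G'$ by reconciling the Schr\"odinger (energy-level) description of the dynamic with the character structure supplied by the SC-pair classification, and you correctly flag this reconciliation as the crux. The paper instead makes a more direct observation: $h^\dagger$ composed with the time basis inclusion is precisely $\internalalgebraSym$ evaluated at the unit state of $G'$, and the module axioms (eqns \ref{eqn_DynamicsDef1}--\ref{eqn_DynamicsDef2}) combined with classicality on both sides immediately force this to be a group homomorphism $G \to G'$, which is then $\pi$. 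Your route and the paper's arrive at the same homomorphism by the same underlying computation (comparing how translations on $G'$ act on characters of $G'$ versus how the Hamiltonian assigns external energy levels), so neither buys anything the other lacks; the paper's phrasing is simply more compressed. Your reverse direction --- build the $\dagger$-SCFA on $\SpaceH$ by transporting the group structure of $G/K$ along the identification given by $h$, then verify strong complementarity by thm \ref{thm_GroupStructure} and extract the left regular action on $G/K$ --- matches the paper's word for word.
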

	\begin{proof} \textit{(sketch)}
		If an internal time observable exists and is strongly complementary to the internal energy structure, then it comes with its own group structure $G'$, and $h^\dagger$ (which is the group action $\hbox{\input{modules/symbols/internalalgebraSym.tex}}\!\!\!\!$ evaluated at the identity of $G'$) is a surjective group homomorphism $G \epim G'$, yielding $G' = G/K$ for $K$ the kernel; $h$ is then the corresponding injective group homomorphism $(G')^\wedge \monom G^\wedge$ between the Pontryagin duals. On the other hand, if $h$ has a subgroup $H$ of $G^\wedge$ as its image, then we can consider the unique $\dagger$-SCFA giving the copiables of $\internalenergystructure$ the group structure of $H$, making $h$ into an injective group homomorphism; then $h^\dagger$ is a surjective group homomorphism, giving a quotient $G/K$ as before.
	\end{proof}

	Finally, consider an external clock synchronised with a number of dynamical systems. 
	Now suppose the clock is discarded by fixing a total energy for the remaining system, and that one of these systems, call it $\hbox{\input{modules/symbols/internalalgebraSym.tex}}\!\!\!$, has non-degenerate Hamiltonian and an internal time observable. Then the following theorem shows that, with an appropriate initial state, the quantum dynamical system $\hbox{\input{modules/symbols/internalalgebraSym.tex}}\!\!\!\!$ acts as an internal clock for all the other systems. This is a key result, as it provides a mechanism explaining the emergence of quantum clocks in terms of bigger, unknown clocks which are only known through the total energy constraint. As before, multiple synchronised systems will be represented by a single system $\hbox{\input{modules/symbols/algebraSym.tex}}\!\!$, which is assumed to be appropriately separable.

	\begin{theorem}\label{thm_SynchronicityActionThm} 
		Let $\hbox{\input{modules/symbols/internalalgebraSym.tex}}\!\!\!\!:\SpaceG \tensor \timeobj \rightarrow \SpaceG$ be a quantum dynamical system with non-degenerate Hamiltonian, internal energy observable $\internalenergystructure$ and some internal time observable making $\SpaceG$ itself a potential clock. Assume $\hbox{\input{modules/symbols/internalalgebraSym.tex}}\!\!\!\!$ and $\hbox{\input{modules/symbols/algebraSym.tex}}\!\!\!\!:\SpaceH \tensor \timeobj \rightarrow \SpaceH$ are synchronised quantum dynamical systems with initial states $\raisebox{\SymFontShift}{\hbox{\input{modules/symbols/internaltimeunitSym.tex}}}$ and $\ket{\psi}$ respectively, the latter arbitrary, and with total energy $\chi$. Then the following map is a unitary dynamic governed by the clock $\SpaceG$:
		\begin{equation}\label{eqn_DynamicDescent}
			\hbox{\input{modules/pictures/DynamicDescent.tex}}
		\end{equation}
	\end{theorem}
	\begin{proof}
		\begin{equation}\label{eqn_DynamicDescentProof1}
			\hbox{\input{modules/pictures/DynamicDescentProof1.tex}}
		\end{equation}
		\begin{equation}\label{eqn_DynamicDescentProof2}
			\hbox{\input{modules/pictures/DynamicDescentProof2.tex}}
		\end{equation}
		and eqn \ref{eqn_DynamicsDef1} applied to $\hbox{\input{modules/symbols/algebraSym.tex}}\!\!\!\!$ completes the proof (observing that $\ket{\psi}$ is arbitrary).
	\end{proof}

	The ideas behind thm \ref{thm_SynchronicityActionThm} are also involved in explaining the emergence of finite-dimensional quantum clocks governing finite-dimensional quantum dynamical systems.

	\begin{remark}\label{rmrk_ContinuousSystems}
		Let $U(t)$ be a strongly continuous 1-parameter unitary group over some finite-dimensional system $\SpaceH$, and let $\nu_1,...,\nu_n$ be the frequencies corresponding to it. Then for any $\varepsilon>0$ there is a $T$ such that $T/\nu_1,...,T/\nu_n$ all differ from the nearmost integer by at most $\varepsilon$. If $\nu = \nu_j$ is the frequency maximal in absolute value, and $N$ is twice the integer part of $T/\nu$, then we can approximate the continuous quantum dynamical system to a precision of $\varepsilon$ with one governed by a finite-dimensional clock with time translation group $\integersMod{N}$. This corresponds to considering the separable action of the discrete subgroup $\frac{1}{2}\nu\integers$ on both the system and some internal clock corresponding to the group $\integersMod{N}$: in analogy with thm \ref{thm_SynchronicityActionThm}, we could see the (approximately) synchronised clock-system pair as obtained by taking the \inlineQuote{external clock} (ticking $\frac{1}{2}\nu\integers$) to be in the ground energy state.
	\end{remark}

\section{Stone's Theorem on 1-parameter unitary groups}
	\label{section_GeneratorsDynamics}

	The standard result in continuous time quantum mechanics that relates Hamiltonians to unitary dynamics is known as \textbf{Stone's theorem on 1-parameter unitary groups}, stating that the strongly continuous group homomorphisms $t \mapsto U_t$ (the unitary dynamic) from the additive reals $(\reals,+,0)$ to the unitary operators $\UnitaryOps{\SpaceH}$ over some Hilbert space $\SpaceH$ are exactly those in the form $U_t = \exp [i t H]$ for some (not necessarily bounded) self-adjoint operator $H$ on $\SpaceH$ (the \inlineQuote{traditional} Hamiltonian). In thm \ref{thm_DynamicsSpectra}, we saw that the relationship between dynamics and Hamiltonians is given, in our framework, by adjunction: this section will recast our results in the familiar form of Stone's Theorem.

	\begin{theorem}\textbf{Stone's Theorem}\label{thm_StoneThmOneParamUGroups}\\
		Let $t \mapsto U_t$ be a strongly continuous group homomorphism $\reals \rightarrow \UnitaryOps{\SpaceH}$, where $\SpaceH$ is any Hilbert space. Then there exists a unique self-adjoint operator $H: \SpaceH \rightarrow \SpaceH$, not necessarily bounded, such that $U_t = \exp[i t H ]$ for all $t : \reals$.
	\end{theorem}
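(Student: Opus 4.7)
The plan is to recover $H$ as the infinitesimal generator of the one-parameter group and then use the spectral theorem to exponentiate. First I would construct a dense domain $\mathcal{D} \subseteq \SpaceH$ on which the generator is unambiguously defined. The standard device is the \emph{G\aa{}rding space}: for any $\psi \in \SpaceH$ and any $f \in C_c^\infty(\reals)$, set $\psi_f := \int_{\reals} f(t) \, U_t \psi \, dt$ as a Bochner integral, which converges by strong continuity and uniform boundedness $\Norm{}{U_t}=1$. Running $f$ through an approximate identity shows $\psi_f \to \psi$, so the linear span $\mathcal{D}$ of such vectors is dense. Using the group law $U_s U_t = U_{s+t}$ and changing variables, the map $s \mapsto U_s \psi_f = \psi_{f(\cdot - s)}$ is norm-differentiable, so we may define $H \psi_f := -i \, \partial_s U_s \psi_f |_{s=0} = i \int f'(t) U_t \psi \, dt$ on $\mathcal{D}$.

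Next I would verify that $H$ is symmetric on $\mathcal{D}$. Using $U_t^* = U_{-t}$ and differentiating $\braket{U_t \phi}{\psi}$ in two ways at $t = 0$, one gets $\braket{H\phi}{\psi} = \braket{\phi}{H\psi}$ for $\phi, \psi \in \mathcal{D}$. The crucial step is to upgrade this to essential self-adjointness of $H$, so that its closure $\bar{H}$ is genuinely self-adjoint. I would prove this by showing the deficiency spaces are trivial: if $\phi \in \domain{H^*}$ satisfies $H^* \phi = \pm i \phi$, then for any $\psi \in \mathcal{D}$ the function $t \mapsto \braket{U_t \psi}{\phi}$ is differentiable with derivative $\braket{i H U_t \psi}{\phi} = \mp \braket{U_t \psi}{\phi}$, forcing exponential behaviour; but unitarity bounds $|\braket{U_t \psi}{\phi}| \leq \Norm{}{\psi} \Norm{}{\phi}$ uniformly in $t$, which is incompatible with a non-zero exponentially growing or decaying function on $\reals$, so $\braket{\psi}{\phi} = 0$ for all $\psi$ in a dense set, whence $\phi = 0$.

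Having obtained a self-adjoint $\bar{H}$, the spectral theorem furnishes a projection-valued measure $E$ on $\reals$ with $\bar{H} = \int \lambda \, dE(\lambda)$, and the functional calculus defines a strongly continuous unitary group $V_t := \int e^{i t \lambda} \, dE(\lambda) = \exp[i t \bar{H}]$. It remains to identify $V_t$ with $U_t$. On the dense domain $\mathcal{D}$, both groups have the same generator $i\bar{H}$; differentiating $W_t \psi := V_{-t} U_t \psi$ shows $\partial_t W_t \psi = V_{-t}(iH - iH)U_t \psi = 0$, so $W_t \psi = \psi$ on $\mathcal{D}$, hence $U_t = V_t$ everywhere by density and boundedness.

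For uniqueness, if $U_t = \exp[itH_1] = \exp[itH_2]$, then by the spectral theorem the infinitesimal generators agree on the intersection of their domains, which is a core for both, forcing $H_1 = H_2$ as self-adjoint operators. The main obstacle I anticipate is the essential self-adjointness step: the G\aa{}rding-space construction and the deficiency-index bookkeeping require some care with unbounded operators and their domains, since in infinite dimensions $H$ can genuinely fail to be bounded and one must work inside $\domain{H^*}$ rather than with abstract algebraic manipulations.
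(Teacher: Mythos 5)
The paper does not supply a proof of this theorem at all; it simply refers the reader to a standard reference, since Stone's theorem is a classical result of functional analysis rather than a contribution of this work. Your proposal is a correct self-contained sketch of the standard argument: build a dense G\aa{}rding domain $\mathcal{D}$ of regularised vectors $\psi_f$, use the group law to see $U_s\psi_f = \psi_{f(\cdot - s)}$ and thereby define a symmetric generator $H$ on $\mathcal{D}$, show essential self-adjointness by killing the deficiency spaces at $\pm i$ via the boundedness-versus-exponential-growth argument, exponentiate the closure $\bar H$ through the spectral theorem and functional calculus, and identify the two groups by differentiating $V_{-t}U_t$ on the common core $\mathcal{D}$. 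All the ingredients that need care are in place: the $U_s$-invariance of $\mathcal{D}$ (needed to make sense of $HU_t\psi$ in the deficiency argument and in the identification step) follows from $U_s\psi_f = \psi_{f(\cdot - s)}$, and uniqueness follows from the spectral theorem as you say. The only blemish is a harmless sign: with your convention $\partial_t U_t = iHU_t$ and $H^*\phi = \pm i\phi$, one gets $\tfrac{d}{dt}\braket{U_t\psi}{\phi} = \pm\braket{U_t\psi}{\phi}$ rather than $\mp$; either way the solution is a real exponential, which is bounded on all of $\reals$ only if it vanishes, so the conclusion is unaffected. This is the proof the cited source would give, so there is no substantive divergence from the paper beyond the fact that you actually wrote it out.
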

	\begin{proof}
		See \cite{CQM-StoneOneParameter}.
	\end{proof}

	\begin{theorem}\textbf{Spectral Theorem}\label{thm_SpectralTheorem}\\
		Let $H: \SpaceH \rightarrow \SpaceH$ be a self-adjoint operator. Then there is a measurable space $X$, a measure $\mu$ and a unitary isomorphism $V: \SpaceH \rightarrow \Ltwo{X,\mu}$ such that $V H V^\dagger = H'$ is a multiplication operator:
		\begin{align*}
			H' : 	\Ltwo{X,\mu} & \rightarrow 	\Ltwo{X,\mu}\\
					\psi		 & \mapsto 	 	(x \mapsto E_x \psi(x)) \numberthis \label{eqn_SpectralTheorem} 
		\end{align*}
		We will refer to the measurable function $E: X \rightarrow \reals$ as the \textbf{spectrum} of the operator $H$. If $H$ is bounded then $E$ is essentially bounded and we have $\Norm{}{H'} = \Norm{\infty}{E}$.
	\end{theorem}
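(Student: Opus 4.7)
The plan is to prove the spectral theorem in two stages: first establish it for bounded self-adjoint operators via a functional-calculus/cyclic-decomposition argument, then bootstrap to the unbounded case through the Cayley transform. Both stages rest on representing the commutative $C^\star$-algebra generated by $H$ (or a bounded function of $H$) as continuous functions on its spectrum, then pulling back to an $\LtwoSym$ space by Riesz--Markov.

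For bounded self-adjoint $H$, I would first build the continuous functional calculus: the polynomial functional calculus $p \mapsto p(H)$ is norm-preserving on $C(\sigma(H))$ by the spectral radius formula $\Norm{}{p(H)} = \Norm{\infty}{\restrict{p}{\sigma(H)}}$, so it extends uniquely to an isometric $\star$-homomorphism $\Phi : C(\sigma(H)) \rightarrow \Bounded{\SpaceH}$. Second, I would decompose $\SpaceH$ into $H$-cyclic subspaces by a Zorn's lemma argument: write $\SpaceH = \bigoplus_{\alpha} \SpaceH_\alpha$ where each $\SpaceH_\alpha$ has a unit cyclic vector $v_\alpha$ for $H$. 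For each $\alpha$, the positive functional $f \mapsto \langle v_\alpha, \Phi(f) v_\alpha \rangle$ on $C(\sigma(H))$ yields, by Riesz--Markov, a positive Borel measure $\mu_\alpha$ on $\sigma(H) \subseteq \reals$, and the densely defined map $\Phi(f) v_\alpha \mapsto f$ extends to a unitary $V_\alpha : \SpaceH_\alpha \rightarrow \Ltwo{\sigma(H), \mu_\alpha}$ intertwining $\restrict{H}{\SpaceH_\alpha}$ with multiplication by the coordinate function $x \mapsto x$. Assembling $X \eqdef \bigsqcup_\alpha \sigma(H)$ with $\mu \eqdef \bigsqcup_\alpha \mu_\alpha$ and $V \eqdef \bigoplus_\alpha V_\alpha$ gives the required multiplication representation, with $E(x) = x$ on each copy; $E$ is essentially bounded with $\Norm{\infty}{E} = \Norm{}{H}$ since the essential range of $E$ coincides with $\sigma(H)$.

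For the unbounded case, I would pass through the Cayley transform $U \eqdef (H - i\,\text{id})(H + i\,\text{id})^{-1}$. Standard domain-chasing arguments show $U$ is everywhere-defined, unitary, and has $1$ not in its point spectrum whenever $H$ is self-adjoint. Applying the bounded spectral theorem to the self-adjoint operators $\operatorname{Re} U$ and $\operatorname{Im} U$ (equivalently, applying the version of the bounded theorem for normal operators to $U$) yields a measure space $(X,\mu)$, a unitary $V : \SpaceH \rightarrow \Ltwo{X,\mu}$, and an essentially bounded $F : X \rightarrow \mathbb{T} \setminus \{1\}$ with $V U V^\dagger$ multiplication by $F$. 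Pulling $F$ back through the inverse Cayley map $z \mapsto i(1+z)/(1-z)$ gives a measurable $E : X \rightarrow \reals$, and one verifies that $V$ carries $\domain{H}$ bijectively onto $\{\psi \in \Ltwo{X,\mu} \mid E\psi \in \Ltwo{X,\mu}\}$ and satisfies $(V H V^\dagger \psi)(x) = E(x) \psi(x)$ on this domain.

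The main obstacle is the bookkeeping for unbounded $H$: specifically, checking that the Cayley image of $\domain{H}$ matches exactly the $\Ltwo{}$ domain of multiplication by $E$, and that self-adjointness of $H$ (not merely symmetry) is what guarantees the unitarity of $U$ and the absence of $1$ from its point spectrum. A secondary subtlety is the Zorn's lemma step in the cyclic decomposition, which requires $\SpaceH$ to be separable to get a countable direct sum with $\sigma$-finite $\mu$; in the general case one must accept a non-$\sigma$-finite $(X,\mu)$, which still satisfies the theorem as stated but complicates some standard constructions downstream.
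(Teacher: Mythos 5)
The paper does not give a proof of this theorem at all; it simply cites \cite{hall2013quantum}, as this is a standard result in functional analysis rather than something proved within the paper's framework. Your sketch is a correct and complete account of the standard textbook route (which is essentially what Hall's book and Reed--Simon follow): continuous functional calculus for bounded $H$ via the Gelfand/spectral-radius argument, decomposition into cyclic subspaces by Zorn's lemma, Riesz--Markov to produce the measures $\mu_\alpha$ on $\sigma(H)$, assembly of the disjoint-union multiplication representation, and then the Cayley transform to pass to the unbounded case with the requisite domain bookkeeping.

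One minor soft spot worth flagging: in the unbounded step you propose applying the bounded spectral theorem ``to the self-adjoint operators $\operatorname{Re} U$ and $\operatorname{Im} U$.'' Applying the theorem to each separately does not by itself produce a \emph{simultaneous} multiplication representation; you would need the joint spectral theorem for a commuting pair. Your parenthetical alternative (the bounded spectral theorem for a single normal operator applied directly to $U$) is the clean way to do this and is what you should take as the primary route. Your remark on the Zorn's lemma step and $\sigma$-finiteness is accurate: in the non-separable case $(X,\mu)$ need not be $\sigma$-finite, but the statement as given in the paper is still satisfied. For the purposes of the paper this level of generality is moot, since everything there is finite-dimensional, in which case $X$ is just a finite orthonormal basis of eigenvectors, $\mu$ is counting measure, and $E$ assigns eigenvalues — the whole machinery collapses to elementary linear algebra.
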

	\begin{proof}
		See \cite{hall2013quantum}.
	\end{proof}

	This is the usual way to derive energy levels for a quantum dynamical system: unfortunately, it turns out not to be canonical. This may seem a merely categorical flaw, but it is in fact related to an important physical fact: valuing energy in the reals is necessarily subject to a choice of units of measurement.\\

	We have seen in section \ref{section_EnergyObservable} that a canonical space for the energy levels associated with a unitary dynamic (governed by some abelian group $G$) is given by the Pontryagin dual $G^\wedge$: any attempt to faithfully value energy in some other space would be equivalent to a choice of group isomorphism $G^\wedge \isom V$ for some $V$. Similarly, when the dynamics are governed by $G = \reals$ we expect any valuation of then energy in $V = \reals$ to be conditional on some choice of units of measurement, i.e. on fixing some isomorphism $\reals^\wedge \isom \reals$.\\

	Units of measurement, seen as group isomorphisms $G^\wedge \stackrel{\isom}{\rightarrow} V$, form a homogeneous space under (transitive and faithful) left regular action of the group automorphisms of $V$. The action corresponds to changing units, and for $V = \reals$ it is the usual multiplication by some non-zero real number. Thus the Hamiltonian and energy spectrum obtained from thms \ref{thm_StoneThmOneParamUGroups} and \ref{thm_SpectralTheorem} are subject to an underlying choice of units of measure $\reals^\wedge \stackrel{\isom}{\longrightarrow} \reals$: thm \ref{thm_StoneThmOneParamUGroups2} will make this statement precise.

	\begin{theorem}\label{thm_StoneThmOneParamUGroups2}
		The isomorphisms $\Isoms{\AbCategory}{\reals^\wedge}{\reals}$ form a homogeneous space under the (faithful and transitive) left regular action of $\Automs{\AbCategory}{\reals}$. Also $\Automs{\AbCategory}{\reals} \isom_{\AbCategory} (\reals^\times,\cdot,1)$, where $\alpha_c := x \mapsto c \cdot x $ is the automorphism corresponding to a non-zero real $c$.
		As a consequence, the bijection of thm \ref{thm_StoneThmOneParamUGroups} is non-canonical, and there is instead a homogeneous space of bijections $U_t = \exp[i t \frac{1}{\hbar} H ]$ between strongly continuous group homomorphisms $(U_t)_{t:\reals}$ and self-adjoint operators $H$, with fiber isomorphic to the homogeneous space $\Isoms{\AbCategory}{\reals^\wedge}{\reals}$ (except at the singular point $(U_t)_t = (\id{\SpaceH})_t$). Singling out one such bijection is equivalent to fixing a choice of isomorphism $\reals^\wedge \isom \reals$.
	\end{theorem}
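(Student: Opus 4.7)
The plan is to recognise Stone's theorem as living over the \textit{torsor} of units of measurement, rather than over a single privileged identification $\reals^\wedge \isom \reals$, and to read off the claimed homogeneous space from the action of $\Automs{\AbCategory}{\reals}$ on spectral values.

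First I would identify $\Automs{\AbCategory}{\reals}$. Working in the category of topological abelian groups required by Pontryagin duality and strong continuity, every automorphism of $(\reals,+,0)$ is a continuous additive self-bijection, hence of the form $\alpha_c : x \mapsto c \cdot x$ for a unique $c \in \reals^\times$, with composition corresponding to multiplication; this gives $\Automs{\AbCategory}{\reals} \isom (\reals^\times,\cdot,1)$. That $\Isoms{\AbCategory}{\reals^\wedge}{\reals}$ is a torsor under left composition is then immediate: the action is faithful because isomorphisms cancel, and transitive because given any $\phi,\psi \in \Isoms{\AbCategory}{\reals^\wedge}{\reals}$ the composite $\psi \circ \phi^{-1}$ lies in $\Automs{\AbCategory}{\reals}$ and witnesses the transitivity.

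Next I would fix one reference identification $\iota : \reals^\wedge \xrightarrow{\isom} \reals$, namely the one sending $\chi \mapsto E$ where $\chi(t) = e^{itE}$, and run thms \ref{thm_StoneThmOneParamUGroups} and \ref{thm_SpectralTheorem} through it: the composite yields the classical bijection $(U_t)_t \leftrightarrow H$ with $U_t = \exp[itH]$. Any other choice $\iota' = \alpha_c \circ \iota$ rescales every spectral value $E : X \to \reals$ of thm \ref{thm_SpectralTheorem} by $c$, and hence sends $H$ to $cH$; writing $c = 1/\hbar$ recovers the familiar $U_t = \exp[it\frac{1}{\hbar}H]$, and shows that each element of $\Isoms{\AbCategory}{\reals^\wedge}{\reals}$ furnishes its own bijection between unitary dynamics and self-adjoint operators.

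Finally I would handle the fiber structure and the singular point. The $\reals^\times$-action $H \mapsto cH$ on self-adjoint operators is free except at $H = 0$, which corresponds uniquely to $(U_t)_t = (\id{\SpaceH})_t$: all choices of units send zero to zero. Thus the forgetful map from the disjoint union (over $\Isoms{\AbCategory}{\reals^\wedge}{\reals}$) of \inlineQuote{labelled} Stone bijections down to pairs $((U_t)_t,H)$ has fiber $\Isoms{\AbCategory}{\reals^\wedge}{\reals}$ over every non-trivial unitary dynamic, and collapses to a point over $(\id{\SpaceH})_t$ --- exactly the homogeneous space structure claimed. The conceptually subtle step, and the one deserving the most care, is verifying that the $c$-rescaling of spectra in thm \ref{thm_SpectralTheorem} is \emph{precisely} the rescaling induced by changing the Pontryagin identification from $\iota$ to $\alpha_c \circ \iota$; the rest is routine bookkeeping on torsors.
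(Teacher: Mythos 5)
Your proof is correct and takes essentially the same approach as the paper's own, very terse, proof, which asserts the first two claims as \inlineQuote{standard checks} and then merely exhibits the $\reals^\times$-rescaling action $\hbar\mapsto\hbar'$ on the space of Stone bijections. You fill in considerably more detail than the paper does --- notably the observation that $\Automs{\AbCategory}{\reals}\isom\reals^\times$ requires working in the topological category implicit in Pontryagin duality (in plain $\AbCategory$ the automorphism group of $(\reals,+)$ is vastly larger), the explicit verification of the torsor axioms for $\Isoms{\AbCategory}{\reals^\wedge}{\reals}$, and the treatment of the singular point at $(U_t)_t=(\id{\SpaceH})_t$, which the paper's proof does not address at all; you also correctly isolate the one substantive compatibility check (that changing the Pontryagin identification by $\alpha_c$ rescales the spectral function of thm \ref{thm_SpectralTheorem} by $c$) as the step requiring actual verification rather than bookkeeping.
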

	\begin{proof}
		The first two observations are standard checks. To see that the bijections form a homogeneous space, all we have to show is that there is an action of $\Automs{\AbCategory}{\reals}$ on them: the action of a $\frac{h}{h'} : \reals^\times$ on the space of bijections is given as follows: 
		\begin{equation}\label{eqn_StoneThmv2HomogeneousSpaceBijs}
			\frac{h}{h'} : U_t = \exp[i t \frac{1}{\hbar} H ] \mapsto U_t = \exp[i t \frac{1}{\hbar'} H' ]
		\end{equation}
	\end{proof}

	Taking thms \ref{thm_StoneThmOneParamUGroups} and \ref{thm_SpectralTheorem} together, the \textbf{energy spectrum} for a unitary dynamic $(U_t)_t$ is usually defined to be $E: X \rightarrow \reals$. However, it is a consequence of thm \ref{thm_StoneThmOneParamUGroups2} that this energy spectrum is non-canonical, depending on a particular choice of unit of measurement for the energy: we will denote by $E^\hbar$ the spectrum associated with a particular bijection $U_t = \exp[i t \frac{1}{\hbar} H ]$. We can, however, define a canonical energy spectrum $\hat{E}: X \rightarrow \reals^\wedge$.

	\begin{theorem}\textbf{Canonical energy spectrum}\label{thm_CanonicalSpectrum}\\
		Let $(U_t)_{t:\reals}$ be a strongly continuous group homomorphism $\reals \rightarrow \UnitaryOps{\SpaceH}$. Fix a bijection $U_t = \exp[i t \frac{1}{\hbar} H ]$, and obtain the\footnote{The decomposition is not really unique. However, the same $X$ works for all $\hbar$, and the construction of $E^\hbar$ is contravariantly functorial with respect to the choice of $X$. So we shall not worry about this any further.} spectral decomposition with $V: \SpaceH \rightarrow \Ltwo{X,\mu}$ and $E^\hbar: X \rightarrow \reals$. Define $\hat{E} : \Ltwo{X,\mu} \rightarrow \reals^\wedge$ by:
		\begin{equation}\label{eqn_CanonicalEnergySpectrumDef}
			x  \mapsto (t \mapsto \exp[i \frac{1}{\hbar} E^\hbar_x t])
		\end{equation}
		Then $\hat{E}$ is independent of the choice of $\hbar : \reals^\times$ (i.e. it is canonical) and we shall refer to it is as the \textbf{canonical energy spectrum} of $(U_t)_t$.
	\end{theorem}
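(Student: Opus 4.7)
The plan is to exploit Theorem \ref{thm_StoneThmOneParamUGroups2} to determine precisely how the spectrum $E^\hbar$ rescales under a change of units, and then verify that this rescaling is exactly compensated by the $\frac{1}{\hbar}$ factor in the exponent of the definition \ref{eqn_CanonicalEnergySpectrumDef}, so that the map $x \mapsto (t \mapsto \exp[i\frac{1}{\hbar} E^\hbar_x t])$ is invariant under the choice of $\hbar$.

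Concretely, I would first fix two units $\hbar, \hbar' : \reals^\times$ and the corresponding bijections $U_t = \exp[i t \frac{1}{\hbar} H] = \exp[i t \frac{1}{\hbar'} H']$. Applying the uniqueness part of Stone's Theorem (thm \ref{thm_StoneThmOneParamUGroups}) to the 1-parameter group $t \mapsto \exp[i t \frac{1}{\hbar} H]$ identifies its unique infinitesimal generator, forcing the relation $\frac{1}{\hbar} H = \frac{1}{\hbar'} H'$, equivalently $H' = \frac{\hbar'}{\hbar} H$. This is exactly the statement of the $\Automs{\AbCategory}{\reals}$-action in eqn \ref{eqn_StoneThmv2HomogeneousSpaceBijs} restated at the level of generators.

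Next, I would invoke the Spectral Theorem for both $H$ and $H'$, using the footnote to choose the \emph{same} measurable space $(X,\mu)$ and unitary $V : \SpaceH \to \Ltwo{X,\mu}$ for both decompositions. Since $H' = \frac{\hbar'}{\hbar} H$ and the spectral decomposition is a unitary intertwiner onto a multiplication operator, the spectrum transforms pointwise as $E^{\hbar'}_x = \frac{\hbar'}{\hbar} E^\hbar_x$ for $\mu$-a.e.\ $x : X$. Substituting into eqn \ref{eqn_CanonicalEnergySpectrumDef} then gives, for every $t : \reals$,
\begin{equation*}
\exp\Bigl[i \tfrac{1}{\hbar'} E^{\hbar'}_x t\Bigr] \;=\; \exp\Bigl[i \tfrac{1}{\hbar'} \cdot \tfrac{\hbar'}{\hbar} E^\hbar_x t\Bigr] \;=\; \exp\Bigl[i \tfrac{1}{\hbar} E^\hbar_x t\Bigr],
\end{equation*}
so $\hat{E}$ is independent of $\hbar$, as required.

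The only real subtlety, and the place where one must proceed carefully, is the matter hinted at in the footnote: the decomposition $(X, \mu, V)$ from Theorem \ref{thm_SpectralTheorem} is not literally unique, so one must check that a single choice can be made compatibly across all units $\hbar$. This is guaranteed because $H$ and $H' = \frac{\hbar'}{\hbar} H$ commute and share their full spectral projectors — rescaling by a non-zero real does not alter the underlying projection-valued measure — so any $(X,\mu,V)$ that spectrally realises $H$ realises $H'$ as well, with spectrum simply multiplied by $\frac{\hbar'}{\hbar}$. Once this compatibility is in hand, the remainder of the argument is a one-line algebraic cancellation.
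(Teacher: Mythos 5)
Your proposal is correct and takes essentially the same route as the paper: it identifies the rescaling $E^{\hbar'} = \frac{\hbar'}{\hbar} E^{\hbar}$ (which the paper reads off from the action of eqn \ref{eqn_StoneThmv2HomogeneousSpaceBijs}) and then observes that the $\frac{1}{\hbar}$ factor cancels it in the exponent. You simply make explicit the derivation of the rescaling from Stone's uniqueness and the footnote's compatibility concern, both of which the paper leaves implicit.
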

	\begin{proof}
		The action defined in eqn \ref{eqn_StoneThmv2HomogeneousSpaceBijs} sends $E^\hbar$ to $E^{\hbar'} = \frac{\hbar'}{\hbar} E^\hbar$. Thus eqn \ref{eqn_CanonicalEnergySpectrumDef} is invariant under action of $\frac{\hbar}{\hbar'} : \reals^\times$.
	\end{proof}

	\begin{remark}\textbf{(Non-demolition Hamiltonian?)}\\
		Given a dynamic $(U_t)_{t:\reals}$ and its canonical energy spectrum $\hat{E}$, we can \inlineQuote{construct} an operator $\hat{H}: \Ltwo{X,\mu} \rightarrow \Ltwo{X,\mu} \tensor \Ltwo{\reals^\wedge}^\star$ similar to the one of thm \ref{thm_DynamicsSpectra} by using delta functions:
		\begin{equation} \label{eqn_DynamicsSpectraReprise}
			\hat{H}: \int_X a_x \ket{x} d\mu(x) \mapsto \int_X a_x \ket{x}\tensor \ket{\hat{E}_x} d\mu(x)
		\end{equation}
	 	We denoted by $\ket{x}$ the delta function at $x : X$ and by $\ket{\hat{E}_x}$ the delta function at $\hat{E}_x : \reals^\wedge$. Subject to a choice $f: \reals^\wedge \isom \reals$ of units of measurement, we can also recover the Hamiltonian of thm \ref{thm_StoneThmOneParamUGroups} from the non-demolition Hamiltonian \inlineQuote{constructed} above:
		\begin{equation} \label{eqn_TraditionalHamiltonian}
			VHV^\dagger = \left(\id{\Ltwo{X,\mu}} \tensor \int_{\reals^\wedge} f(\chi) \bra{\chi} d\chi\right) \cdot \hat{H}
		\end{equation}
		The operator $\int_{\reals^\wedge} f(\chi) \bra{\chi} d\chi$ is nothing but $f$ extended linearly on the basis of delta functions for $\reals^\wedge$.
	\end{remark}

	The non-demolition Hamiltonian above is not fully rigorous, so we need take a different road to link Stone's Theorem with our finite group dynamics. Recasting the results in terms of projection-valued measures provides us with a viable alternative.

	\begin{lemma}\label{lemma_ProjectionValuedSpectrum}
		Let $X,Y$ be measurable spaces (with sigma-algebras $\Sigma_X$ and $\Sigma_Y$), $\mu$ a measure on $X$ and $f: X \rightarrow Y$ measurable. Then $f$ determines a projection-valued measure $\pi_f: \Sigma_Y \rightarrow \Bounded{\Ltwo{X,\mu}}$ by:
		\begin{equation}\label{eqn_ProjectionValuedSpectrum}
			\pi_f(U) = \text{projection onto subspace } \Ltwo{f^{-1}(U),\mu}
		\end{equation}
		for all $U : \Sigma_Y$. If $V: \SpaceH \rightarrow \Ltwo{X,\mu}$ is a unitary, then $\pi_E$ can be seen (giving $V$ as understood) as a projection valued measure $\Sigma_Y \rightarrow \Bounded{\SpaceH}$ by considering $V^\dagger \pi_E V$.
	\end{lemma}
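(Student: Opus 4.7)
The plan is to verify the projection-valued measure axioms one by one, after first recognising $\pi_f(U)$ as a concrete multiplication operator. Since $f$ is measurable, $f^{-1}(U) \in \Sigma_X$ for every $U \in \Sigma_Y$, so the indicator $\chi_{f^{-1}(U)}$ is a measurable $\{0,1\}$-valued function on $X$, and I would identify the projection onto $\Ltwo{f^{-1}(U),\mu}$ (viewed as the subspace of functions supported on $f^{-1}(U)$) with the multiplication operator $M_{\chi_{f^{-1}(U)}}$. Self-adjointness and idempotency then follow from $\overline{\chi} = \chi$ and $\chi^2 = \chi$ pointwise, so each $\pi_f(U)$ is an orthogonal projection in $\Bounded{\Ltwo{X,\mu}}$.

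Next I would reduce each measure axiom to a pointwise identity of indicator functions pulled back through $f^{-1}$. Normalisation gives $\chi_{f^{-1}(\emptyset)} = 0$ and $\chi_{f^{-1}(Y)} = \chi_X$, yielding $\pi_f(\emptyset) = 0$ and $\pi_f(Y) = \id{\Ltwo{X,\mu}}$. Pairwise orthogonality for disjoint $U, U' \in \Sigma_Y$ follows from $\chi_{f^{-1}(U)} \cdot \chi_{f^{-1}(U')} = \chi_{f^{-1}(U \cap U')} = 0$, since preimage preserves disjointness. For countable additivity, given a disjoint family $(U_n)_n$ in $\Sigma_Y$, the pointwise identity
\begin{equation*}
    \chi_{f^{-1}(\bigsqcup_n U_n)} = \sum_n \chi_{f^{-1}(U_n)}
\end{equation*}
holds by disjointness of the $f^{-1}(U_n)$, and this is where the one genuine care is needed: the sum must converge in the strong operator topology rather than merely pointwise.

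The mildly delicate step will be justifying strong-operator convergence of the partial sums, and my plan is to handle it via dominated convergence. For any $\psi \in \Ltwo{X,\mu}$, the partial sums $\sum_{n \leq N} \chi_{f^{-1}(U_n)} \psi$ converge pointwise to $\chi_{f^{-1}(\bigsqcup_n U_n)} \psi$ and are uniformly dominated by $|\psi| \in \Ltwo{X,\mu}$, so dominated convergence gives convergence in $\LtwoSym$-norm, which is exactly strong-operator countable additivity. This completes the proof that $\pi_f$ is a projection-valued measure on $\Ltwo{X,\mu}$.

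For the final clause, given a unitary $V : \SpaceH \rightarrow \Ltwo{X,\mu}$, I would simply observe that conjugation $A \mapsto V^\dagger A V$ is a $\ast$-isomorphism $\Bounded{\Ltwo{X,\mu}} \rightarrow \Bounded{\SpaceH}$ that sends projections to projections, preserves the identity and zero operators, and is continuous in the strong operator topology on bounded sets. Hence $U \mapsto V^\dagger \pi_f(U) V$ inherits each of the verified axioms from $\pi_f$, giving the desired projection-valued measure $\Sigma_Y \rightarrow \Bounded{\SpaceH}$.
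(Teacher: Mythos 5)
The paper states this lemma without supplying a proof, treating it as a standard measure-theoretic fact, so there is no internal argument to compare against. Your verification is correct and complete: identifying $\pi_f(U)$ with the multiplication operator $M_{\chi_{f^{-1}(U)}}$ is the right move, the reduction of the PVM axioms to pointwise identities of pulled-back indicators is sound (normalisation, orthogonality, and finite additivity all come out directly), the dominated-convergence argument is exactly what is needed to upgrade pointwise a.e.\ convergence of $\sum_{n\le N}\chi_{f^{-1}(U_n)}\psi$ to $\LtwoSym$-norm convergence, hence strong-operator countable additivity, and the closing observation that $A \mapsto V^\dagger A V$ is a $\ast$-isomorphism $\Bounded{\Ltwo{X,\mu}} \to \Bounded{\SpaceH}$ preserving projections, $0$, $\id{}$, and strong-operator limits on bounded sets correctly transports the PVM structure to $\SpaceH$. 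No gaps.
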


	\begin{theorem}\textbf{Spectral Theorem (projection-valued)}\label{thm_SpectralTheoremProj}\\
		Let $H: \SpaceH \rightarrow \SpaceH$ be a self-adjoint operator. Let $V: \SpaceH \rightarrow \Ltwo{X,\mu}$ and spectrum $E: X \rightarrow \reals$ be given by thm \ref{thm_SpectralTheorem}. If $\pi_E$ is the projection-valued measure defined by lemma \ref{lemma_ProjectionValuedSpectrum}, 
		\newpage
		then we can reconstruct $H$ as:
		\begin{equation}
			H = \int_\reals \! \lambda \, d\pi_E(\lambda)
		\end{equation}
	\end{theorem}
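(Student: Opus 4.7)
The plan is to conjugate through the unitary $V$ and reduce the claim to a statement about multiplication operators on $\Ltwo{X,\mu}$. By Theorem \ref{thm_SpectralTheorem} we already know $V H V^\dagger = M_E$, the multiplication operator by $E$, so it suffices to show that $V \bigl(\int_\reals \lambda\, d\pi_E(\lambda)\bigr) V^\dagger = M_E$. Equivalently, one must identify the pushed-forward projection-valued measure $V \pi_E V^\dagger$ with the canonical PVM of $M_E$ on $\Ltwo{X,\mu}$, and then apply the standard spectral integral construction on $\Ltwo{X,\mu}$.

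First I would unpack Lemma \ref{lemma_ProjectionValuedSpectrum}: for a Borel set $U \subseteq \reals$, $\pi_E(U)$ is the projection onto the subspace $\Ltwo{E^{-1}(U),\mu}$ of $\Ltwo{X,\mu}$, which is nothing other than multiplication by the indicator $\mathbf{1}_{E^{-1}(U)} = \mathbf{1}_U \circ E$. A routine check (countable additivity under strong operator convergence, orthogonality of projections for disjoint preimages, $\pi_E(\reals) = \id{}$) confirms that $U \mapsto M_{\mathbf{1}_U \circ E}$ is a bona fide projection-valued measure on $\reals$ with values in $\Bounded{\Ltwo{X,\mu}}$.

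Then I would verify the spectral identity on simple Borel functions first. If $s = \sum_k \lambda_k \mathbf{1}_{U_k}$ with the $U_k$ disjoint, the spectral integral $\int_\reals s\, d\pi_E$ reduces by definition to $\sum_k \lambda_k M_{\mathbf{1}_{U_k} \circ E} = M_{s \circ E}$, an immediate algebraic calculation. To pass to the identity function $\lambda \mapsto \lambda$, let $(s_n)$ be simple Borel functions converging monotonically to it (for example, truncation at $\pm n$ together with dyadic discretisation). Then $s_n \circ E \to E$ pointwise $\mu$-almost everywhere, and for every $\psi \in \Ltwo{X,\mu}$ with $E \cdot \psi \in \Ltwo{X,\mu}$, the dominated convergence theorem yields $M_{s_n \circ E}\psi \to M_E \psi$ in $\Ltwo{X,\mu}$. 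Simultaneously, the standard construction of $\int_\reals \lambda\, d\pi_E(\lambda)$ as the strong limit of $\int_\reals s_n\, d\pi_E$ on the spectral domain gives the matching identification after conjugating back through $V$.

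The main obstacle will be correctly handling the unboundedness of $H$: the spectral integral $\int_\reals \lambda\, d\pi_E(\lambda)$ is only defined on the maximal domain $\{\psi \in \SpaceH : \int_\reals \lambda^2\, d\langle \psi, \pi_E(\lambda) \psi\rangle < \infty\}$, which under $V$ corresponds precisely to $\{f \in \Ltwo{X,\mu} : E \cdot f \in \Ltwo{X,\mu}\}$, the natural domain of $M_E$; so the equality must be read as an equality of operators together with their domains, and the limit arguments above must be carried out on vectors in this common domain. If one only cares about the bounded case (equivalently $E$ essentially bounded, by the last clause of Theorem \ref{thm_SpectralTheorem}), domain issues vanish and the proof collapses to the simple-function calculation followed by uniform approximation, with $\Norm{}{H} = \Norm{\infty}{E}$ controlling all convergence.
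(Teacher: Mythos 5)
The paper states this theorem without proof, treating it as a standard fact of functional analysis (only thm~\ref{thm_SpectralTheorem} gets a citation); your job was effectively to supply the missing standard argument, and you have done so correctly. Conjugating through $V$, identifying $\pi_E(U)$ with the multiplication operator $M_{\mathbf{1}_U \circ E}$, verifying the spectral integral on simple functions, and then passing to the identity function via approximation is exactly the right route, and your attention to the maximal domain in the unbounded case is precisely what is needed to make the statement literally true as an equality of (possibly unbounded) operators.

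One small imprecision: you say you take simple Borel functions $s_n$ \emph{converging monotonically} to $\lambda \mapsto \lambda$ via truncation at $\pm n$ with dyadic discretisation. For a sign-changing function, monotone convergence is not available; what you actually have is $|s_n| \leq |\lambda|$ with $s_n \to \lambda$ pointwise, which is fine because you then invoke dominated convergence (the correct tool here) rather than monotone convergence. The wording should be corrected, but the argument itself is unaffected. You should also note explicitly at the start that, since $H$ acts on $\SpaceH$ and $\pi_E$ is defined on $\Ltwo{X,\mu}$, the $\pi_E$ in the theorem's display must be read as $V^\dagger \pi_E V$ per the final sentence of lemma~\ref{lemma_ProjectionValuedSpectrum} --- you implicitly handle this by conjugating, but stating it up front would remove any ambiguity.
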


	\begin{theorem}\textbf{Stone's Theorem (projection-valued)}\label{thm_StoneThmOneParamUGroupsProjValued}\\
		Let $(U_t)_{t:\reals}$ be a strongly continuous group homomorphism $\reals \rightarrow \UnitaryOps{\SpaceH}$. Let $V: \SpaceH \rightarrow \Ltwo{X,\mu}$ unitary isomorphism and $\hat{E}: X \rightarrow \reals^\wedge$ canonical energy spectrum be given by thm \ref{thm_CanonicalSpectrum}. If $\pi_{\hat{E}}$ is the projection-valued measure defined by lemma \ref{lemma_ProjectionValuedSpectrum}, then we can reconstruct $(U_t)_t$ as:
		\begin{equation}
			U_t = \int_{\reals^\wedge} \!\! \chi(t) \, d\pi_{\hat{E}}(\chi)
		\end{equation}		
	\end{theorem}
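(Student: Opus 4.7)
\textit{(proposal)}
The plan is to derive the integral formula by combining the classical Stone and spectral theorems with a change of variables along the canonical map $\reals \rightarrow \reals^\wedge$ implicit in the definition of $\hat{E}$ (eqn \ref{eqn_CanonicalEnergySpectrumDef}).

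First I would fix an arbitrary choice of units $\hbar : \reals^\times$ (the final expression will be shown not to depend on it) and apply thm \ref{thm_StoneThmOneParamUGroups} to obtain the unique self-adjoint operator $H$ with $U_t = \exp[it \frac{1}{\hbar} H]$. I would then apply thm \ref{thm_SpectralTheoremProj} to $H$, yielding the unitary $V : \SpaceH \rightarrow \Ltwo{X,\mu}$, the spectrum $E^\hbar : X \rightarrow \reals$, and the associated projection-valued measure $\pi_{E^\hbar}$ on the Borel $\sigma$-algebra of $\reals$ such that $H = \int_\reals \lambda \, d\pi_{E^\hbar}(\lambda)$.

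Next I would invoke the Borel functional calculus for projection-valued measures in order to exchange exponentiation and integration, obtaining
\begin{equation*}
U_t \;=\; \exp\!\left[it \tfrac{1}{\hbar} \int_\reals \lambda \, d\pi_{E^\hbar}(\lambda)\right] \;=\; \int_\reals \exp\!\left[it \tfrac{1}{\hbar} \lambda\right] d\pi_{E^\hbar}(\lambda).
\end{equation*}
Define $\Phi_\hbar : \reals \rightarrow \reals^\wedge$ by $\lambda \mapsto (t \mapsto \exp[it\frac{1}{\hbar}\lambda])$; by eqn \ref{eqn_CanonicalEnergySpectrumDef}, this is precisely the factor making $\hat{E} = \Phi_\hbar \circ E^\hbar$. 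A direct computation using the preimage characterisation of lemma \ref{lemma_ProjectionValuedSpectrum} shows that $\pi_{\Phi \circ f} = \Phi_* \pi_f$ for any measurable $\Phi$ and $f$ (since $(\Phi \circ f)^{-1}(W) = f^{-1}(\Phi^{-1}(W))$); applying this with $\Phi = \Phi_\hbar$ and $f = E^\hbar$, and performing the change of variables $\chi = \Phi_\hbar(\lambda)$ in the integral above, gives
\begin{equation*}
U_t \;=\; \int_{\reals^\wedge} \chi(t) \, d\pi_{\hat{E}}(\chi).
\end{equation*}

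The main point requiring care --- and the reason the theorem is worth stating separately from thm \ref{thm_StoneThmOneParamUGroups} --- is the independence of the right-hand side from $\hbar$. This follows immediately from thm \ref{thm_CanonicalSpectrum}: the map $\hat{E}$ is manifestly $\hbar$-invariant, hence so are $\pi_{\hat{E}}$ and the integrand $\chi \mapsto \chi(t)$. The entire $\hbar$-dependence of the second expression above is absorbed into $\Phi_\hbar$; rescaling $\hbar \mapsto \hbar'$ via the action of eqn \ref{eqn_StoneThmv2HomogeneousSpaceBijs} simultaneously adjusts $\Phi$ and $E$ so as to leave their composite invariant. The only real obstacle is the routine verification that the Borel functional calculus and the pushforward of projection-valued measures interact as stated, both of which are standard facts in spectral theory.
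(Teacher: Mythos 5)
The paper states this theorem without an explicit proof: together with thm~\ref{thm_SpectralTheoremProj}, it is presented as a standard repackaging of thms~\ref{thm_StoneThmOneParamUGroups} and~\ref{thm_SpectralTheorem} in projection-valued form, so there is no in-text argument to compare against. Your derivation is sound and follows precisely the route the surrounding results set up: Stone plus the spectral theorem give $U_t = \exp[it\tfrac{1}{\hbar}H]$ with $H$ diagonalised as multiplication by $E^\hbar$ on $\Ltwo{X,\mu}$; the Borel functional calculus converts this to $U_t = \int_\reals \exp[it\lambda/\hbar]\,d\pi_{E^\hbar}(\lambda)$; and the identities $\hat E = \Phi_\hbar\circ E^\hbar$ (from eqn~\ref{eqn_CanonicalEnergySpectrumDef}) and $\pi_{\Phi\circ f} = \Phi_*\pi_f$ (which you correctly read off from the preimage characterisation in lemma~\ref{lemma_ProjectionValuedSpectrum}), combined with the change-of-variables formula for pushforward projection-valued measures, turn the $\reals$-integral into the $\reals^\wedge$-integral. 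The $\hbar$-invariance of the final expression is exactly what thm~\ref{thm_CanonicalSpectrum} supplies, as you observe.

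Two small points of hygiene, neither of which is a genuine gap. First, thm~\ref{thm_StoneThmOneParamUGroups} as stated produces $H$ with $U_t = \exp[itH]$; the $\hbar$-parameterised form $U_t = \exp[it\tfrac{1}{\hbar}H]$ is the reparameterisation introduced in thm~\ref{thm_StoneThmOneParamUGroups2} (via the action of eqn~\ref{eqn_StoneThmv2HomogeneousSpaceBijs}), or equivalently is what you get by following the setup of thm~\ref{thm_CanonicalSpectrum} verbatim, which fixes such a bijection from the outset --- you should cite that rather than thm~\ref{thm_StoneThmOneParamUGroups} alone. Second, it is worth saying explicitly that $\Phi_\hbar : \reals \rightarrow \reals^\wedge$ is a topological group isomorphism (so in particular Borel measurable, making the pushforward well-defined), and that the integrand $\chi \mapsto \chi(t)$ takes values on the unit circle, so the change-of-variables identity for projection-valued integrals applies with no boundedness or convergence caveats.
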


	Finally, the form of Stone's theorem on 1-parameter unitary groups given by thm \ref{thm_StoneThmOneParamUGroupsProjValued} can be extended to the dynamics described in this work, remembering that the dynamics of def \ref{def_Dynamics} correspond exactly to (necessarily strongly continuous) group homomorphisms $G \rightarrow \UnitaryOps{\SpaceH}$, where $G$ is a finite abelian group and $\SpaceH$ is finite-dimensional.

	\begin{theorem}\textbf{Canonical energy spectrum (finite groups)}\label{thm_CanonicalSpectrumFinite}\\
		Let $(U_t)_{t:G}$ be a group homomorphism $G \rightarrow \UnitaryOps{\SpaceH}$. Let $\hat{H}: \SpaceH \rightarrow \SpaceH \tensor \Ltwo{G^\wedge}^\star$ be the corresponding Hamiltonian from thm \ref{thm_DynamicsSpectra}, and $X$  an orthonormal basis of eigenvalues for $\hat{H}$. Let $V: \SpaceH \rightarrow \Ltwo{X}$ be the unitary corresponding to the basis, and define the canonical energy spectrum $\hat{E}:X \rightarrow G^\wedge$ via the character basis of $\Ltwo{G^\wedge}$:
		\begin{equation}
			\hat{E}(\ket{x}) := \left[\left(\bra{x} \tensor \id{\Ltwo{G^\wedge}^\star}\right)\hat{H} \ket{x}\right]^\dagger
		\end{equation} 
		Then the projection-valued measure $\pi_{\hat{E}}$ is independent of the choice of basis\footnote{Seen as having projections in $\Bounded{\SpaceH}$, its correct form would be $V^\dagger \pi_{\hat{E}} V$, where $E$ is dependent on the choice of $V$. The statement here is that the entire expression $V^\dagger \pi_{\hat{E}} V$ is independent of the choice of $V$.} and coincides with the complete family of orthogonal projectors defined by $\hat{H}$.
	\end{theorem}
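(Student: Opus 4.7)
The plan is to reduce the independence claim to the canonical complete family of projectors $\{P_\chi\}_{\chi:G^\wedge}$ that is attached to $\hat{H}$ by thm \ref{thm_FourierTransform} (eqn \ref{eqn_EnergySpectraResolutionId}): once one shows that $\pi_{\hat{E}}$ is literally built from the $P_\chi$'s, basis-independence is automatic since the $P_\chi$'s were defined directly from $\hat{H}$ via the resolution of the identity by energy levels, with no reference to any eigenbasis.

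First I would recall that $\hat{H}$ is the adjoint of a unitary dynamic (thm \ref{thm_DynamicsSpectra}), so eqn \ref{eqn_EnergySpectraResolutionId} applies and gives an orthogonal decomposition $\id{\SpaceH} = \sum_\chi P_\chi$, where each $P_\chi$ is the spectral projector onto the $\chi$-eigenspace of $\hat{H}$ in the sense that $\hat{H}\, P_\chi = P_\chi \tensor \ket{\chi}$ (viewing the character basis of $\Ltwo{G^\wedge}^\star$ as labelling outputs). Since each eigenvector $\ket{x} : X$ of $\hat{H}$ lies entirely inside one of the orthogonal eigenspaces $\text{Im}(P_\chi)$, the basis $X$ partitions as $X = \bigsqcup_\chi X_\chi$ with $X_\chi$ an orthonormal basis of $\text{Im}(P_\chi)$.

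Next I would compute $\hat{E}(\ket{x})$ explicitly. By the previous point, if $\ket{x} \in X_\chi$ then $\hat{H}\ket{x} = \ket{x} \tensor \ket{\chi}$, so $(\bra{x} \tensor \id{})\hat{H}\ket{x} = \bra{\chi}$ (the character as a covector), and taking adjoints gives $\hat{E}(\ket{x}) = \ket{\chi}$, i.e., $\hat{E}$ is exactly the function $x \mapsto \chi(x)$ that records which eigenspace a given eigenvector lies in. Applying lemma \ref{lemma_ProjectionValuedSpectrum}, for any $U \subseteq G^\wedge$:
\begin{equation}
\pi_{\hat{E}}(U) \;=\; V^\dagger \left(\sum_{\chi \in U} \sum_{\ket{x}\in X_\chi} \ket{x}\bra{x}\right) V \;=\; \sum_{\chi \in U} P_\chi,
\end{equation}
since $\sum_{\ket{x}\in X_\chi} \ket{x}\bra{x}$ is by construction the orthogonal projector onto $\text{Im}(V P_\chi V^\dagger)$ and conjugating back by $V$ recovers $P_\chi$.

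The right-hand side depends only on the family $\{P_\chi\}$, which was obtained directly from $\hat{H}$ via eqn \ref{eqn_EnergySpectraResolutionId} without any reference to $V$ or $X$. This simultaneously gives basis-independence of $\pi_{\hat{E}}$ and its coincidence with the complete family of orthogonal projectors defined by $\hat{H}$. The only mildly delicate step is the first one, namely that any $\hat{H}$-eigenvector actually sits inside a single $\text{Im}(P_\chi)$; this is where the fact that the $P_\chi$'s are not merely a resolution but are genuinely the spectral projectors of $\hat{H}$ (rather than some coarser decomposition) must be invoked, and it follows immediately by acting on $\ket{x}$ with both sides of the identity $\hat{H} = \sum_\chi P_\chi \tensor \ket{\chi}$ and matching components along the orthonormal character basis of $\Ltwo{G^\wedge}^\star$.
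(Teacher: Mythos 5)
Your proof is correct, and it takes what is almost certainly the intended route: reduce $\pi_{\hat{E}}$ to the canonical family $\{P_\chi\}_{\chi : G^\wedge}$ from eqn \ref{eqn_EnergySpectraResolutionId}, which depends only on $\hat{H}$. Note that the paper actually states this theorem without supplying a proof, so there is nothing to compare against; but your computation of $\hat{E}(\ket{x})$ from the eigenvector equation $\hat{H}\ket{x}=\ket{x}\tensor\ket{\chi}$, the partition $X = \bigsqcup_\chi X_\chi$ (justified by matching components against the orthonormal character basis, as you correctly flag), and the final identification $V^\dagger \pi_{\hat{E}}(U) V = \sum_{\chi\in U} P_\chi$ are all sound and complete the argument.
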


	\begin{theorem}\textbf{Stone's Theorem (finite groups)}\label{thm_StoneThmFinite}\\
		Let $(U_t)_{t:G}$ be a group homomorphism $G \rightarrow \UnitaryOps{\SpaceH}$, with $V: \SpaceH \rightarrow \Ltwo{X}$ and canonical energy spectrum $\hat{E}:X \rightarrow \reals^\wedge$ given by thm \ref{thm_CanonicalSpectrumFinite}. If $\pi_{\hat{E}}$ is the projection-valued measure defined by lemma \ref{lemma_ProjectionValuedSpectrum}, then we can reconstruct $(U_t)_t$ as:
		\begin{equation}\label{eqn_DynamicIntegralProjectors}
			U_t = \int_{G^\wedge} \!\! \chi(t) \, d\pi_{\hat{E}}(\chi)
		\end{equation} 		
	\end{theorem}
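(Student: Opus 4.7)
The plan is to reduce the measure-theoretic integral in eqn \ref{eqn_DynamicIntegralProjectors} to the finite-sum decomposition of unitary dynamics in terms of energy-level projectors given by eqn \ref{eqn_EnergySpectraResolutionId}. First, I would note that since $G$ is finite, so is the Pontryagin dual $G^\wedge$, and the integral degenerates to the finite sum
\[
\int_{G^\wedge} \chi(t) \, d\pi_{\hat{E}}(\chi) \;=\; \sum_{\chi \in G^\wedge} \chi(t) \, P_\chi,
\]
where $P_\chi := \pi_{\hat{E}}(\{\chi\})$ is the orthogonal projector onto $\Span{\ket{x} : \hat{E}(x) = \chi}$; this step is immediate from lemma \ref{lemma_ProjectionValuedSpectrum} applied to the counting measure on $X$.

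Next, I would invoke thm \ref{thm_CanonicalSpectrumFinite} to identify the family $(P_\chi)_{\chi \in G^\wedge}$ with the complete family of orthogonal projectors defined by the Hamiltonian $\hat{H}$ appearing in eqn \ref{eqn_EnergySpectraResolutionId}. That theorem already guarantees the projection-valued measure is basis-independent and coincides with the spectral projectors of $\hat{H}$, so there is no additional work required at this step.

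Finally, I would show that for each eigenvector $\ket{x}$ of $\hat{H}$ with eigenvalue $\chi = \hat{E}(x)$, the identity $U_t \ket{x} = \chi(t) \, \ket{x}$ holds. This follows from combining the adjunction of thm \ref{thm_DynamicsSpectra} with def \ref{def_EigenstatesEigenvalues}: feeding a time state $\ket{t}$ into the energy slot of the unitary dynamic (which is the adjoint of the Hamiltonian) extracts from the labelling character $\chi$ the phase $\chi(t)$ defined in eqn \ref{eqn_EnergyLevels}. Summing over an orthonormal eigenbasis $X$ and collecting terms by energy level then yields $U_t = \sum_{\chi \in G^\wedge} \chi(t) P_\chi$, matching the formula above and hence eqn \ref{eqn_DynamicIntegralProjectors}.

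The main obstacle is the diagrammatic step three: making fully precise the translation between the categorical adjunction of dynamics and Hamiltonians and the operational statement that Hamiltonian eigenstates are phase-eigenstates of the unitary dynamic, with phase prescribed by evaluating the labelling character at $t$. Once this translation is in hand, everything else is finite-dimensional bookkeeping, since there are no convergence or domain issues and the projection-valued measure reduces to a finite partition of the identity.
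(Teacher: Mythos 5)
Your proposal is correct and takes essentially the same route the paper has in mind: the paper gives no explicit proof for this theorem, instead remarking in the paragraph immediately following it that linearly extending the projection-valued measure of thm \ref{thm_CanonicalSpectrumFinite} recovers the Hamiltonian of def \ref{def_Spectra}, extending $(U_t)_t$ recovers a dynamic in the sense of def \ref{def_Dynamics}, and the formula then reduces to thm \ref{thm_DynamicsSpectra}. Your three steps (discretise the integral to a finite sum of projectors, identify those projectors with the spectral resolution of eqn \ref{eqn_EnergySpectraResolutionId} via thm \ref{thm_CanonicalSpectrumFinite}, and compute $U_t \ket{x} = \chi(t)\ket{x}$ on each eigenvector by plugging the time state into the $\timeobj$ input of $\alpha = \hat{H}^\dagger$) merely make that reduction explicit at the level of individual eigenvectors; the only detail worth spelling out in step three is the pairing $\braket{\chi}{t} = \chi(t)$ between the group-structure copiable $\ket{\chi}$ and the time basis, which is exactly the convention fixed by eqn \ref{eqn_EnergyLevels} and thm \ref{thm_EnergyStructure}.
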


	The measure provided by thm \ref{thm_CanonicalSpectrumFinite} can be extended linearly to obtain the Hamiltonian $\hat{H}$ in the form of def \ref{def_Spectra}, and similar extension turns $(U_t)_{t:G}$ into a dynamic in the form of def \ref{def_Dynamics}. Under these circumstances, eqn \ref{eqn_DynamicIntegralProjectors} yields the same statement as thm \ref{thm_DynamicsSpectra}, closing the link between Stone's Theorem on 1-parameter unitary groups and the dynamics/Hamiltonians duality presented previously.

\section{General dynamics}
	\label{section_GeneralDynamics}

	The topics in this paper have been presented from the point of view of the cyclic groups $\integersMod{N}$ in order to exploit the parallelisms with continuous time, but the entire framework 
	\newpage
	fits well into the more general understanding of time translation in quantum dynamical systems as the action of a certain symmetry group. In this more general context, presented more in detail in \cite{StefanoGogioso-RepTheoryCQM}, one works with (unitary) representations of symmetry groups encoded by strongly complementary pairs $\dagger$-FA/$\dagger$-SCFA in exactly the same way as done here  with unitary dynamics; the Hamiltonian gives the conserved quantities for the symmetry in the form of the irreducible characters (not just the multiplicative ones) of the (possibly non-abelian) group. Almost all constructions presented here readily generalise to the case of abelian symmetry groups; the generalisation to non-abelian groups, however, is tricky and will be deferred to future work. 

\section{Conclusions and future work}
	\label{section_ConclusionsFutureWork}

	We have provided the key definitions and results needed for a systematic treatment of quantum dynamics in the framework of CQM. We have defined quantum clocks and shown their relation to strongly complementary structures. We have highlighted the duality between unitary dynamics and observables, and used it to define Hamiltonians and Fourier transforms. We have proven the equivalence of strong complementarity and canonical commutation of observables, and concluded the necessary existence of a conjugate pair of time/energy observables for quantum clocks, satisfying an adequate uncertainty principle. We have given a monadic definition of Schr\"{o}dinger's equation, and used it to show the validity of the Feynman's clock construction in our framework. Touching base with a more concrete understanding of clocks and dynamical systems, we have defined families of systems synchronised with a clock, and shown that \inlineQuote{forgetting} the clock is equivalent to setting a conserved total energy for the remaining systems. We have also proven necessary and sufficient conditions for the existence of an internal time observable, and used the latter to explain how and when a quantum dynamical system in a synchronised family can behave as a clock governing the other systems in the family. Finally, we have proven a finite-dimensional analogue to Stone's Theorem on 1-parameter unitary groups that applies to our case.\\
	
	As explained briefly in section \ref{section_GeneralDynamics}, this work on time and energy can be framed in the more general context of symmetries of quantum system and their associated conserved quantities: CQM already contains most of the tools needed (collected in \cite{StefanoGogioso-RepTheoryCQM}) for a treatment of the problem in terms of representation theory, but the generalisation of a number of results to the non-abelian case turned out to be especially tricky. It will certainly be interesting to expand the work in this direction in the future. Furthermore, remark \ref{rmrk_ContinuousSystems} applies the ideas of thm \ref{thm_SynchronicityActionThm} to the case of the time translation group $\reals$, but it does so informally. An infinite-dimensional generalisation of the methods, e.g. to $\integers$ and $\reals$, would also be of major interest: unfortunately some of the key structures used in this work exist only in finite dimensions, and the correct infinite-dimensional generalisation is an active area of research in CQM.

\section*{Acknowledgements}
	The author would like to thank Bob Coecke, Aleks Kissinger, Clare Horsman, Amar Hadzihasanovic, Will Zeng, Sukrita Chatterji and Nicol\`o Chiappori for plenty of comments, suggestions, useful discussions and support. Funding from EPSRC and Trinity College is gratefully acknowledged. 

\bibliographystyle{IEEEtran}
\bibliography{bibliography/CategoryTheory,bibliography/CategoricalQM,bibliography/NonLocalityContextuality,bibliography/QuantumComputing,bibliography/ClassicalMechanics,bibliography/LogicComputation,bibliography/Gravitation,bibliography/QFT,bibliography/StatisticalPhysics,bibliography/Misc,bibliography/StefanoGogioso}

\begin{thebibliography}{10}
\providecommand{\url}[1]{#1}
\csname url@samestyle\endcsname
\providecommand{\newblock}{\relax}
\providecommand{\bibinfo}[2]{#2}
\providecommand{\BIBentrySTDinterwordspacing}{\spaceskip=0pt\relax}
\providecommand{\BIBentryALTinterwordstretchfactor}{4}
\providecommand{\BIBentryALTinterwordspacing}{\spaceskip=\fontdimen2\font plus
\BIBentryALTinterwordstretchfactor\fontdimen3\font minus
  \fontdimen4\font\relax}
\providecommand{\BIBforeignlanguage}[2]{{%
\expandafter\ifx\csname l@#1\endcsname\relax
\typeout{** WARNING: IEEEtran.bst: No hyphenation pattern has been}%
\typeout{** loaded for the language `#1'. Using the pattern for}%
\typeout{** the default language instead.}%
\else
\language=\csname l@#1\endcsname
\fi
#2}}
\providecommand{\BIBdecl}{\relax}
\BIBdecl

\bibitem{CQM-seminal}
\BIBentryALTinterwordspacing
S.~Abramsky and B.~Coecke, \emph{Categorical Quantum Mechanics}. [Online].
  Available: \url{http://arxiv.org/abs/0808.1023}
\BIBentrySTDinterwordspacing

\bibitem{CQM-DeepBeauty}
H.~Halvorson, Ed., \emph{Deep Beauty}.\hskip 1em plus 0.5em minus 0.4em\relax
  Cambridge University Press, 2011.

\bibitem{CQM-NewStructures}
B.~Coecke, Ed., \emph{New Structures for Physics}.\hskip 1em plus 0.5em minus
  0.4em\relax Springer, 2011.

\bibitem{CQM-QCSnotes}
B.~Coecke and A.~Kissinger, \emph{Picturing Quantum Processes}.\hskip 1em plus
  0.5em minus 0.4em\relax Cambridge University Press, 2015 (to appear).

\bibitem{QFT-BaezRosetta}
\BIBentryALTinterwordspacing
J.~C. Baez and M.~Stay, ``Physics, topology, logic and computation: A rosetta
  stone,'' 2009. [Online]. Available:
  \url{http://math.ucr.edu/home/baez/rosetta.pdf}
\BIBentrySTDinterwordspacing

\bibitem{CQM-QuantumPicturalism}
\BIBentryALTinterwordspacing
B.~Coecke, ``Quantum picturalism,'' \emph{Contemporary Physics}. [Online].
  Available: \url{http://arxiv.org/abs/0908.1787}
\BIBentrySTDinterwordspacing

\bibitem{CQM-QuantumClassicalStructuralism}
\BIBentryALTinterwordspacing
B.~Coecke, E.~O. Paquette, and D.~Pavlovic, ``Classical and quantum
  structuralism.'' [Online]. Available: \url{http://arxiv.org/abs/0904.1997}
\BIBentrySTDinterwordspacing

\bibitem{CQM-QuantumMeasuNoSums}
\BIBentryALTinterwordspacing
B.~Coecke and D.~Pavlovic, \emph{Quantum measurements without sums}. [Online].
  Available: \url{http://arxiv.org/abs/quant-ph/0608035}
\BIBentrySTDinterwordspacing

\bibitem{CQM-EvironmentClassicalChannels}
B.~Coecke and S.~Perdrix, ``Environment and classical channels in categorical
  quantum mechanics,'' \emph{Logical Methods in Computer Science}, 2011.

\bibitem{CQM-StrongComplementarity}
\BIBentryALTinterwordspacing
B.~Coecke, R.~Duncan, A.~Kissinger, and Q.~Wang, ``Strong complementarity and
  non-locality in categorical quantum mechanics,'' 2012. [Online]. Available:
  \url{http://arxiv.org/abs/1203.4988v2}
\BIBentrySTDinterwordspacing

\bibitem{CQM-KissingerPhdthesis}
\BIBentryALTinterwordspacing
A.~Kissinger, ``Pictures of processes: Automated graph rewriting for monoidal
  categories and applications to quantum computing,'' Ph.D. dissertation, 2012.
  [Online]. Available: \url{http://arxiv.org/abs/1203.0202}
\BIBentrySTDinterwordspacing

\bibitem{QFT-DoeringIsham4}
\BIBentryALTinterwordspacing
A.~Doering and C.~J. Isham, ``A topos foundation for theories of physics: Iv.
  categories of systems,'' 2007. [Online]. Available:
  \url{http://arxiv.org/abs/quant-ph/0703066v1}
\BIBentrySTDinterwordspacing

\bibitem{StefanoGogioso-MonadicDynamics}
\BIBentryALTinterwordspacing
S.~Gogioso, ``Monadic dynamics,'' 2015. [Online]. Available:
  \url{http://arxiv.org/abs/1501.04921}
\BIBentrySTDinterwordspacing

\bibitem{QTC-FeynmannSimulatingPhysicsComputers}
R.~P. Feynman, ``Simulating physics with computers,'' \emph{International
  Journal of Theoretical Physics}, 1982.

\bibitem{QGR-QuantumGraphenity}
\BIBentryALTinterwordspacing
T.~Fritz, ``Quantum graphenity,'' 2011. [Online]. Available:
  \url{http://fqxi.org/data/essay-contest-files/Fritz_essay.pdf}
\BIBentrySTDinterwordspacing

\bibitem{QTC-FeynmannClockNewVariationalPrinciple}
\BIBentryALTinterwordspacing
J.~R. McClean, J.~A. Parkhill, and A.~Aspuru-Guzik, ``Feynman's clock, a new
  variational principle, and parallel-in-time quantum dynamics,'' 2013.
  [Online]. Available: \url{http://arxiv.org/abs/1301.2326}
\BIBentrySTDinterwordspacing

\bibitem{CQM-TopologyQuantumAlgorithms}
\BIBentryALTinterwordspacing
J.~Vicary, ``The topology of quantum algorithms,'' \emph{LICS}, 2013. [Online].
  Available: \url{http://arxiv.org/abs/1209.3917}
\BIBentrySTDinterwordspacing

\bibitem{StefanoGogioso-RepTheoryCQM}
\BIBentryALTinterwordspacing
S.~Gogioso and W.~Zeng, ``Fourier transforms from strongly complementary
  observables,'' 2015. [Online]. Available:
  \url{http://arxiv.org/abs/1501.04995}
\BIBentrySTDinterwordspacing

\bibitem{CQM-OrthogonalBases}
\BIBentryALTinterwordspacing
B.~Coecke, D.~Pavlovic, and J.~Vicary, ``A new description of orthogonal
  bases.'' [Online]. Available: \url{http://arxiv.org/abs/0810.0812v1}
\BIBentrySTDinterwordspacing

\bibitem{CQM-ZXCalculusSeminal}
\BIBentryALTinterwordspacing
B.~Coecke and R.~Duncan, ``Interacting quantum observables,'' \emph{Lecture
  Notes in Computer Science}, 2008. [Online]. Available:
  \url{http://arxiv.org/abs/0906.4725}
\BIBentrySTDinterwordspacing

\bibitem{Msc-TimeQTProspectus}
\BIBentryALTinterwordspacing
T.~Pashby, ``Time and quantum theory: A history and a prospectus,'' 2013.
  [Online]. Available: \url{http://philsci-archive.pitt.edu/9995/}
\BIBentrySTDinterwordspacing

\bibitem{CQM-StoneOneParameter}
\BIBentryALTinterwordspacing
M.~H. Stone, ``On one-parameter unitary groups in hilbert space,'' \emph{Annals
  of Mathematics}, 1932. [Online]. Available:
  \url{http://www.jstor.org/stable/1968538}
\BIBentrySTDinterwordspacing

\bibitem{hall2013quantum}
B.~C. Hall, \emph{Quantum theory for mathematicians}.\hskip 1em plus 0.5em
  minus 0.4em\relax Springer, 2013.

\end{thebibliography}

\end{document}